\documentclass{llncs}
\pdfoutput=1

\usepackage{amsthm}
\usepackage{booktabs}
\usepackage{rotating}
\usepackage{color}
\usepackage{caption}
\usepackage{listings}
\usepackage{xspace}
\usepackage{hyperref}
\usepackage{wrapfig} 
\usepackage{amsfonts}
\usepackage{amsmath}
%

\usepackage{amssymb}
\usepackage{temporal}
\usepackage{hyperref}
\usepackage{stmaryrd}
\usepackage{enumerate}
\usepackage{times}

%


\newcommand{\stateset}{\expandafter\MakeUppercase\expandafter{\state}}

\newcommand{\trans}{\ensuremath{\tau}}

\newcommand{\statevar}{x}
\newcommand{\statevarset}{\expandafter\MakeUppercase\expandafter{\statevar}}

\newcommand{\state}{\overline{\statevar}}
\newcommand{\allvar}{\mathcal{\statevarset}}
\newcommand{\obsvar}{o}
\newcommand{\obsvarset}{\expandafter\MakeUppercase\expandafter{\obsvar}}
\newcommand{\obsval}{\overline{\obsvar}}
\newcommand{\inputvar}{y}
\newcommand{\inputvarset}{\expandafter\MakeUppercase\expandafter{\inputvar}}
\newcommand{\inputval}{\overline{\inputvar}}
\newcommand{\val}{v}
\newcommand{\valset}{\bbB}
\newcommand{\valsetof}[1]{\valset^{#1}}
\newcommand{\trace}{\pi}
\newcommand{\restrict}[1]{{\restriction_{#1}}}
\newcommand{\randomvar}{r}
\newcommand{\randomvarset}{\expandafter\MakeUppercase\expandafter{\randomvar}}

\newcommand{\controlvar}{c}
\newcommand{\controlvarset}{\expandafter\MakeUppercase\expandafter{\controlvar}}

\newcommand{\controlval}{\overline{\controlvar}}
\newcommand{\memoryvar}{z}
\newcommand{\memoryvarset}{\expandafter\MakeUppercase\expandafter{\memoryvar}}

\newcommand{\memoryval}{\overline{\memoryvar}}
\newcommand{\memtrans}{\mu}
\newcommand{\sched}{{\sf sched}}
\newcommand{\schedvar}{s}
\newcommand{\bound}{b}
\newcommand{\card}[1]{\left| \,{#1}\, \right|}
\newcommand{\spec}{\Phi}
\newcommand{\impl}{~\rightarrow~}
\newcommand{\observe}{o}
\newcommand{\observations}{\expandafter\MakeUppercase\expandafter{\observe}}
\newcommand{\annot}{\lambda}
\newcommand{\cost}{{\sf cost}}

\newcommand{\mU}{\mathcal{U}}
\newcommand{\bbB}{\mathbb{B}}
\newcommand{\bbN}{\mathbb{N}}

\newcommand{\automaton}{\mU}

\definecolor{darkgreen}{rgb}{0,0.5,0}
\definecolor{darkblue}{rgb}{0,0,.5}
\definecolor{mygray}{gray}{.3}

\lstdefinelanguage{my_c}{morekeywords={
                  if, else, while, do, for, continue, break},
                           keywordstyle=\bf\color{darkblue},
                           directivestyle=\color{darkblue},
                           commentstyle=\color{darkgreen},
                           sensitive=true,
                           morecomment=[l]{//},
                           morecomment=[s]{/*}{*/},
                           morestring=[b]',
                           morestring=[b]",
                           moredelim=*[l][\color{darkgreen}]\#
                          }[keywords,comments,strings,directives]
 \lstset{xleftmargin = 3mm,
         framexleftmargin = 3mm,
         numberblanklines=false,
         numbersep=3pt,
         escapeinside={/*E}{E*/},
         frame=lines,
         emph=[2]{_init,zero_target},
         emphstyle=\textit,
         numbers=left,
         numberstyle=\tiny \color{mygray},
         language=my_c,
         basicstyle=\scriptsize,
         index={6},
         indexstyle=\color{red},
         numberbychapter=false
         }

\newcommand{\mrk}[1]{{\bf\color{red}#1}}
\newcommand{\ctrl}{?}

\newcommand{\ags}{AGS\xspace}

\makeatletter
\DeclareFontFamily{OMX}{MnSymbolE}{}
\DeclareSymbolFont{MnLargeSymbols}{OMX}{MnSymbolE}{m}{n}
\SetSymbolFont{MnLargeSymbols}{bold}{OMX}{MnSymbolE}{b}{n}
\DeclareFontShape{OMX}{MnSymbolE}{m}{n}{
    <-6>  MnSymbolE5
   <6-7>  MnSymbolE6
   <7-8>  MnSymbolE7
   <8-9>  MnSymbolE8
   <9-10> MnSymbolE9
  <10-12> MnSymbolE10
  <12->   MnSymbolE12
}{}
\DeclareFontShape{OMX}{MnSymbolE}{b}{n}{
    <-6>  MnSymbolE-Bold5
   <6-7>  MnSymbolE-Bold6
   <7-8>  MnSymbolE-Bold7
   <8-9>  MnSymbolE-Bold8
   <9-10> MnSymbolE-Bold9
  <10-12> MnSymbolE-Bold10
  <12->   MnSymbolE-Bold12
}{}

\let\llangle\@undefined
\let\rrangle\@undefined
\DeclareMathDelimiter{\llangle}{\mathopen}%
                     {MnLargeSymbols}{'164}{MnLargeSymbols}{'164}
\DeclareMathDelimiter{\rrangle}{\mathclose}%
                     {MnLargeSymbols}{'171}{MnLargeSymbols}{'171}
\makeatother

\newcommand{\ccat}{\circ}

\newif \ifextended

\usepackage{thmtools, thm-restate}

\usepackage{makeidx}  
\begin{document}
\extendedtrue
\frontmatter          
\pagestyle{plain}
\mainmatter              
\title{Assume-Guarantee 
Synthesis for Concurrent Reactive Programs with Partial Information%
}
\titlerunning{Assume-Guarantee Synthesis with Limited Observability}
%
\author{Roderick Bloem\inst{1}, 					
       Krishnendu Chatterjee\inst{2},
       Swen Jacobs\inst{1,3},
       Robert K\"onighofer\inst{1}
      }

\authorrunning{Bloem et al.} 
%
%
\institute{$^1$ IAIK, Graz University of Technology, Austria\\
           $^2$ IST Austria (Institute of Science and Technology Austria)\\
           $^3$ Reactive Systems Group, Saarland University, Germany}

\maketitle              

\begin{abstract}
Synthesis of program parts is very useful for concurrent systems. However, most synthesis approaches do not support common design tasks, like modifying a single process without having to re-synthesize or verify the whole system.
Assume-guarantee synthesis (AGS) provides robustness against 
modifications of system parts, but thus far has been limited to the perfect 
information setting.  This means that local variables 
cannot be hidden from other processes, which renders synthesis
results cumbersome or even impossible to realize. We resolve this shortcoming by defining AGS in a partial information 
setting. We analyze the complexity and decidability in different settings, showing that the problem has a high worst-case complexity and is undecidable in 
many interesting cases. Based on these observations, we present a pragmatic algorithm based on bounded 
synthesis, and demonstrate its practical applicability on 
several examples. 
\end{abstract}

%
\section{Introduction}
\label{sec:intro}

Concurrent programs 
are notoriously hard to get
right, due to unexpected behavior emerging from the interaction of different 
processes. At the same time, concurrency aspects such as mutual exclusion or
deadlock freedom are easy to express declaratively. This makes
concurrent programs an ideal subject for automatic synthesis. Due to the 
prohibitive complexity of synthesis 
tasks~\cite{PnueliR89,PnueliR90,FinkbeinerS05}, the automated 
construction of \emph{entire} programs from high-level specifications such as LTL is
often unrealistic. 
More practical approaches are based on partially 
implemented programs that should be completed or refined 
automatically~\cite{FinkbeinerS05,FinkbeinerJ12,Solar-Lezama13}, or program repair, 
where suitable replacements need to be synthesized for faulty program 
parts~\cite{JobstmannSGB12}.
This paper focuses on such applications, where parts of the system are
already given.

When several processes need to be synthesized or refined
simultaneously, a fundamental question arises:  What are the assumptions about
the behavior of other processes on which a particular process should rely? 
The classical synthesis approaches assume either completely adversarial or 
cooperative
behavior, which leads to problems in both cases: adversarial components may 
result in unrealizability of the system, while cooperative components may may 
rely on a specific form of cooperation, and 
therefore are not robust against even small changes in a single process. 
Assume-Guarantee Synthesis
(\ags)~\cite{ChatterjeeH07} uses a more reasonable assumption: processes are 
adversarial, but will not violate their own specification to obstruct others.
Therefore, a 
system constructed by \ags will still satisfy its overall specification if we 
replace or refine one of the processes, as long as the new process satisfies
its local specification. 
Furthermore, \ags leads to the desired solutions in cases where 
the classical notions (of cooperative or completely adversarial processes)
 do not, for example in the synthesis of mutual 
exclusion protocols~\cite{ChatterjeeH07} or fair-exchange protocols 
for digital contract signing~\cite{CR14}. 

A drawback of existing algorithms for \ags~\cite{ChatterjeeH07,CR14} is that 
they only work in a perfect information setting. This means that each
component can access and use the values of all variables of the other
processes. This is a major restriction, as most concurrent implementations 
rely on variables that are \emph{local} to one process, and should not be changed
or observed by the other process. 
While classical notions of synthesis have been 
considered in such partial information settings 
before~\cite{MadhusudanT01,FinkbeinerS05}, we provide the first solution for 
AGS with partial information.

\ifextended
In this work, we extend the AGS approach for simultaneous synthesis of multiple
processes with partial information restrictions, and analyze complexity and
decidability of AGS for several different cases. Furthermore, we provide the
first implementation of AGS, integrated into a programming model that combines
the synthesis of concurrent reactive programs with ideas from program 
sketching.
Our framework allows for a combined imperative-declarative programming style,
with fine-grained, user-provided restrictions on the exchange of information
between processes.  Our prototype implementation also
supports optimization of the synthesized program with respect to user-defined 
preferences, for example a small number of shared variables.  We 
demonstrate the value of
our approach on a number of small programs and protocols.

\smallskip
\noindent
\textbf{Complexity and Decidability of \ags.} 
We use reductions of assume-guarantee synthesis problems to problems about 
games with three players to obtain a number of new complexity results. 
We distinguish the general case, where synthesized programs may contain 
additional variables, from the memoryless case, where no variables may be 
added. We provide new complexity results for these two cases in both the 
perfect and the partial information setting, and for specifications in 
different fragments of linear-time temporal logic (LTL).
We show undecidability for general AGS under partial information for all 
fragments we consider, in particular for basic safety properties. 
Table~\ref{tab:complexity} gives an overview of the complexity of 
\ags.

\smallskip
\noindent
\textbf{Algorithms for \ags.}
In light of the high complexity of many \ags problems, we propose a pragmatic 
approach, based on program sketching and synthesis with bounded resources. 
Inspired by the bounded synthesis approach~\cite{FinkbeinerS13}, we reduce undecidable \ags 
problems under partial information to a sequence of decidable
\ags problems with bounded memory.

To this end, we formalize how to do bounded synthesis based on a program 
sketch. Our synthesis algorithm 
uses a translation of the specification into universal co-B\"uchi tree 
automata (cf.~\cite{FinkbeinerS13}), and an encoding of the existence of a correct instantiation
of the sketch
into a satisfiability modulo theories (SMT) problem. We show that the 
approach can be extended to the \ags setting by generating a number of 
separate SMT problems, and searching for a solution of their conjunction.

\smallskip
\noindent
\textbf{Implementation and Evaluation.}
We have implemented our algorithm and provide an evaluation on a number of
examples,
including Peterson's mutual exclusion protocol, a P2P filesharing protocol, a
double buffering protocol, and synthesis of atomic sections in a
concurrent device driver. We give sketches of these protocols that leave open
some decisions that are essential for correctness, and show that our AGS
algorithms finds suitable solutions.  Our tool also supports the optimization
of the synthesized implementation with respect to different metrics like
the number of memory updates or the size of atomic sections.  Using this
feature, we synthesize implementations that are both correct and optimal in a
certain sense.
Furthermore, we demonstrate how the robustness of \ags solution allows us to 
refine parts of the synthesized program without starting synthesis from 
scratch.
\else
\smallskip \noindent {\bf Contributions.} 
In this work, we extend assume-guarantee synthesis to the synthesis of
processes with partial information. In particular,
\begin{enumerate}[i)] 
\item we analyze the complexity and decidability of \ags by reductions 
to games with three players. We distinguish synthesis problems 
based on informedness (perfect or partial) and resources (bounded or unbounded 
memory) of processes, and specifications in different fragments of linear-time temporal 
logic (LTL). 
\item In light of the high complexity of many \ags problems, we propose a 
pragmatic approach, based on partially implemented programs
and synthesis with bounded resources.
We extend the bounded synthesis approach~\cite{FinkbeinerS13} to enable 
synthesis from partially defined, non-deterministic programs, and to the \ags setting.
\item We provide the first implementation of \ags, integrated into a 
programming model that allows for a combined imperative-declarative 
programming style with fine-grained, user-provided restrictions on the 
exchange of information
between processes. 
To obtain efficient and simple code, our prototype also
supports optimization of the synthesized program with respect to some basic user-defined 
metrics.
\item We demonstrate the value of our approach on a number of small programs 
and protocols, including Peterson's mutual exclusion protocol, 
\ifextended a P2P  filesharing protocol, \fi
a double buffering protocol, and synthesis of atomic sections in a
concurrent device driver.
We also demonstrate how the robustness of \ags solutions allows us to 
refine parts of the synthesized program without starting synthesis from 
scratch.
\end{enumerate}
\fi

\section{Motivating Example}
\label{sec:ex}

We illustrate our approach using the running example of~\cite{ChatterjeeH07}, 
a version of Peterson's mutual exclusion protocol.  More details can be found in 
\ifextended
Section~\ref{sec:exp_pet}.
\else
Appendix~\ref{sec:exp_pet}.
\fi

\smallskip
\noindent
\textbf{Sketch.} 
We use the term \emph{sketch} for concurrent reactive programs with non-deterministic choices.  
Listing~\ref{lst:p1} shows a sketch for Peterson's protocol with processes P1 
and P2. Variable \texttt{flag}$i$ indicates that P$i$ wants to enter the critical section, and
\texttt{cr}$i$ that P$i$ is in the critical section. The 
first \texttt{while}-loop waits for permission to enter the critical section, 
the second loop models some local computation. Question marks denote non-deterministic choices, and we want to synthesize expressions that replace question marks such that P1 and P2 never visit the critical section simultaneously.

\smallskip
\noindent
\textbf{Specification.}
The desired properties of both processes are (1) that whenever a process wants 
to enter the critical section, it will eventually enter it (starvation freedom), 
and (2) that the two processes are never in the critical section simultaneously 
(mutual exclusion). In Linear Temporal Logic (LTL)\footnote{In case the reader 
is not familiar with LTL: $\always$ is a temporal operator meaning ``in all time 
steps''; likewise $\eventually$ means ``at some point in the future''.}, this 
corresponds to the specification 
$\varphi_i = \always(\neg \texttt{cr1} \vee \neg \texttt{cr2}) \wedge 
\always(\texttt{flag}i \rightarrow \eventually \texttt{cr}i)$, 
for $i \in \{1,2\}$.

\begin{figure}[tb]
\captionof{lstlisting}{Sketch of Peterson's mutual exclusion protocol. \texttt{F}=$\false$, \texttt{T}=$\true$.}
\label{lst:p1}
\vspace{-0.4cm}
\begin{minipage}{0.49\textwidth}
\begin{lstlisting}[firstnumber=0]
                           turn:=F; flag1:=F; flag2:=F;
cr1:=F; wait1:=F;
do { // Process P1:
  flag1:=T;
  turn:=T;
  while(/*E\mrk{\ctrl$_{1,1}$}E*/) {} //wait /*E\label{lst:p1:q1}E*/
  cr1:=T;
  cr1:=F; flag1:=F; wait1:=T;
  while(/*E\mrk{\ctrl$_{1,2}$}E*/) {} //local work
  wait1:=F;
} while(T)
\end{lstlisting}
\end{minipage}
\hspace{0.1cm}
\begin{minipage}{0.49\textwidth}
\begin{lstlisting}[firstnumber=20]

cr2:=F; wait2:=F;
do { // Process P2:
  flag2:=T;
  turn:=F;
  while(/*E\mrk{\ctrl$_{2,1}$}E*/) {} //wait
  cr2:=T; //read:=/*E\mrk{\ctrl$_{2,3}$}E*/ /*E\label{lst:p2:crit}E*/
  cr2:=F; flag2:=F; wait2:=T;
  while(/*E\mrk{\ctrl$_{2,2}$}E*/) {} //local work
  wait2:=F;
} while(T)
\end{lstlisting}
\end{minipage}
\end{figure}

\smallskip
\noindent
\textbf{Failure of classical approaches.} 
There are essentially two options for applying standard synthesis techniques. 
First, we may assume that both processes are cooperative, and synthesize all 
\mrk{\ctrl$_{i,j}$} simultaneously.  However, the resulting implementation of P2 
may only work for the computed implementation of P1, i.e., changing P1 may break 
P2.  For instance, the solution \mrk{\ctrl$_{1,1}$} = \texttt{turn \& flag2}, 
\mrk{\ctrl$_{2,1}$} = \texttt{!turn} and \mrk{\ctrl$_{i,2}$} = \texttt{F} 
satisfies the specification, but changing \mrk{\ctrl$_{1,2}$} in P1 to 
\texttt{T} will make P2 starve. Note that this is not just a hypothetical case; 
we got exactly this solution in our experiments \ifextended 
(Section~\ref{sec:exp_pet}). \else (Appendix~\ref{sec:exp_pet})\fi. As a second 
option, we may assume that the processes are adversarial, i.e., P2 must work for 
\emph{any} P1 and vice versa. However, under this assumption, the problem is 
unrealizable~\cite{ChatterjeeH07}. 

\smallskip
\noindent
\textbf{Success of Assume-Guarantee Synthesis (\ags)~\cite{ChatterjeeH07}.} \ags 
fixes this dilemma by requiring that P2 must work for \emph{any} realization of 
P1 that satisfies its local specification (and vice 
versa). An \ags solution for Listing~\ref{lst:p1} is \mrk{\ctrl$_{1,1}$} = 
\texttt{turn \& flag2}, \mrk{\ctrl$_{2,1}$} = \texttt{!turn \& flag2} and 
\mrk{\ctrl$_{i,2}$} = \texttt{F}.

\smallskip
\noindent
\textbf{Added advantage of \ags.}
If one process in an \ags solution is changed or extended, but still satisfies 
its original specification, then the other process is guaranteed to do so as 
well.  We illustrate this feature by extending P2 with a new variable named 
\texttt{read}. It is updated in a yet unknown way (expressed by 
\mrk{\ctrl$_{2,3}$}) whenever P2 enters the critical section in 
line~\ref{lst:p2:crit} of Listing~\ref{lst:p1}.  Assume we want to implement 
\mrk{\ctrl$_{2,3}$} such that \texttt{read} is $\true$ and $\false$ infinitely 
often.  We take the solution from the previous paragraph and synthesize 
\mrk{\ctrl$_{2,3}$} such that P2 satisfies $\varphi_2 \wedge  
(\always\eventually\neg \texttt{read}) \wedge 
(\always\eventually\texttt{read})$, where $\varphi_2$ is the original 
specification of P2.  The fact that the modified process still satisfies 
$\varphi_2$ implies that P1 will still satisfy its original specification. We also notice that 
modular refinement saves overall synthesis time:  our tool takes  $19+55 = 74$ 
seconds to synthesize an \ags solution and refine it in a second step to get the expected solution with $\mrk{\ctrl_{2,3}} = \neg \texttt{read}$; direct 
synthesis of the refined specification for both processes requires $263$ 
seconds.



\smallskip
\noindent
\textbf{Drawbacks of the existing~\cite{ChatterjeeH07} \ags framework.}  While 
\ags provides important improvements over classical approaches, it may still 
produce solutions like $\mrk{\ctrl_{1,1}} = \texttt{turn} \wedge \neg 
\texttt{wait2}$ and $\mrk{\ctrl_{2,1}} = \neg \texttt{turn} \wedge \neg 
\texttt{wait1}$. However, $\texttt{wait2}$ is intended to be a \emph{local} 
variable of P2, and thus invisible for P1. Solutions may also utilize modeling 
artifacts such as program counters, because \ags has no way to restrict the 
information visible to other processes. As a workaround, \cite{ChatterjeeH07} 
allows the user to define candidate implementations for each 
\mrk{\ctrl}, and let the synthesis algorithm select one of the candidates.  
However, this way, a significant part of the problem needs to be solved by 
the user.

\smallskip
\noindent
\textbf{\ags with partial information.} 
Our approach resolves this shortcoming by allowing the declaration of local 
variables. The user can write $f_{1,1}(\texttt{turn,flag2})$ instead of 
\mrk{\ctrl$_{1,1}$} to express that the solution may only depend on 
\texttt{turn} and \texttt{flag2}. Including more variables of P$1$ does 
not make sense for this example, because their value is fixed at the call site.  
When setting $\mrk{\ctrl_{2,1}} = f_{1,2}(\texttt{turn,flag1})$ (and 
$\mrk{\ctrl_{i,2}} = f_{i,2}()$), we get the solution proposed by Peterson: 
$\mrk{\ctrl_{1,1}} = \texttt{turn} \wedge \texttt{flag2}$ and $\mrk{\ctrl_{2,1}} 
= \neg \texttt{turn} \wedge \texttt{flag1}$ (and $\mrk{\ctrl_{i,2}} = 
\texttt{F}$).  This is the only \ags solution with these dependency 
constraints.


\smallskip
\noindent
\textbf{\ags with additional memory and optimization.} Our approach can also 
introduce additional memory in form of new variables. As with existing 
variables, the user can specify which question mark may depend on the memory 
variables, and also which variables may be used to update the memory. For our 
example, this feature can be used to synthesize the entire synchronization from 
scratch, without using \texttt{turn}, \texttt{flag1}, and \texttt{flag2}. 
Suppose we remove \texttt{turn}, allow some memory \texttt{m} instead, and 
impose the following restrictions: \mrk{\ctrl$_{1,1}$}$=f_{1,1}(\texttt{flag2}, 
\texttt{m})$, \mrk{\ctrl$_{2,1}$}$=f_{2,1}(\texttt{flag1}, \texttt{m})$, 
\mrk{\ctrl$_{i,2}$} is an uncontrollable input (to avoid overly simplistic 
solutions), and \texttt{m} can only be updated depending on the program counter 
and the old memory content.  Our approach also supports cost functions over the 
result, and optimizes solutions iteratively.  For our example, the user can 
assign costs for each memory update in order to obtain a simple solution with 
few memory updates. In this setup, our approach produces the solution presented 
in Listing~\ref{lst:p1sol}.
\begin{figure}[tb]
\captionof{lstlisting}{Result for
Listing~\ref{lst:p1}: \texttt{turn} is replaced by memory \mrk{m} in a clever
way.}
\label{lst:p1sol}
\vspace{-0.4cm}
\begin{minipage}{0.49\textwidth}
\begin{lstlisting}[firstnumber=0]
                     flag1:=F; flag2:=F; /*E\mrk{m:=F}E*/;
cr1:=F; wait1:=F;
do { // Process P1:
  flag1:=T;
  while(/*E\mrk{!m}E*/) {} //wait
  cr1:=T;
  cr1:=F; flag1:=F; wait1:=T;
  while(input1()) //work/*E\label{lst:p1sol:in1}E*/
    /*E\mrk{m:=F}E*/; 
  wait1:=F; /*E\mrk{m:=F;}E*/
} while(T)
\end{lstlisting}
\end{minipage}
\hspace{0.1cm}
\begin{minipage}{0.49\textwidth}
\begin{lstlisting}[firstnumber=20]

cr2:=F; wait2:=F;
do { // Process P2:
  flag2:=T;
  while(/*E\mrk{m}E*/) {} //wait
  cr2:=T;
  cr2:=F; flag2:=F; wait2:=T;
  while(input2()) //work/*E\label{lst:p1sol:in2}E*/
    /*E\mrk{m:=T};E*/
  wait2:=F; /*E\mrk{m:=T;}E*/
} while(T)
\end{lstlisting}
\end{minipage}
\end{figure}
It is surprisingly simple: It requires only one bit of memory \texttt{m}, 
ignores both \texttt{flag}s (although we did not force it to), and updates 
\texttt{m} only twice\footnote{The memory \texttt{m} is updated whenever an 
input is read in line~\ref{lst:p1sol:in1} or~\ref{lst:p1sol:in2}; we copied the 
update into both branches to increase readability.}.  Our proof-of-concept 
implementation took only 74 seconds to find this solution.

\section{Definitions}
\label{sec:prelim}

In this section we first define processes, refinement, schedulers, and 
specifications. Then we consider different versions of the co-synthesis problem, 
depending on \emph{informedness} (partial or perfect), \emph{cooperation}
(cooperative, competitive, assume-guarantee), and \emph{resources} (bounded or 
unbounded) of the players.

\smallskip
\noindent
\textbf{Variables, valuations, traces.} Let $\statevarset$ be a finite 
set of binary variables. A \emph{valuation} on $\statevarset$ is a function 
$\val: \statevarset \rightarrow \bbB$ that assigns to each variable $\statevar 
\in \statevarset$ a value $\val(\statevar) \in \bbB$. We write 
$\valsetof{\statevarset}$ for 
the set of valuations on $\statevarset$, and $u \ccat v$ for the concatenation 
of valuations $u \in \valsetof{\statevarset}$ and $v \in \valsetof{
\statevarset'}$ to a valuation in $\valsetof{\statevarset \cup \statevarset'}$.
 A \emph{trace} on $\statevarset$ is 
an infinite sequence $(\val_0,\val_1, \ldots)$ of 
valuations on $\statevarset$. Given a valuation $\val \in \valsetof{
\statevarset}$ and a 
subset $\statevarset' \subseteq \statevarset$ of the variables, define $\val 
\restrict{\statevarset'}$ as the \emph{restriction} of $v$ to 
$\statevarset'$. Similarly, for a trace $\trace = (\val_0,\val_1, \ldots)$ on 
$\statevarset$, write $\trace \restrict{\statevarset'} = (\val_0 
\restrict{\statevarset'}, 
\val_1 \restrict{\statevarset'}, \ldots)$ for the restriction of $\trace$ 
to the variables $\statevarset'$. The restriction operator extends naturally 
to sets of valuations and traces.

\smallskip
\noindent
\textbf{Processes and refinement.}
We consider non-deterministic processes, where the non-determinism is modeled by 
variables that are not under the control of the process. We call these variables 
\emph{input}, but they may also be internal variables with non-deterministic 
updates. For $i \in \{1,2\}$, a \emph{process} $P_i =  
(\statevarset_i,\obsvarset_i,\inputvarset_i,\trans_i)$ consists of finite sets
\begin{itemize}
\item $\statevarset_i$ of modifiable state variables,
\item $\obsvarset_i \subseteq \statevarset_{3-i}$ of observable (but not modifiable) state variables,
\item $\inputvarset_i$ of input variables,
\end{itemize}
and a \emph{transition function} $\trans_i: 
\valsetof{\statevarset_i} \times \valsetof{\obsvarset_i} \times \valsetof{\inputvarset_i} \rightarrow 
\valsetof{\statevarset_i}$. The transition function 
maps a current valuation of state and input variables to the next 
valuation for the state variables. We write $\statevarset = 
\statevarset_1 \cup \statevarset_2$ for the set of state 
variables of both processes, and similarly $\inputvarset = \inputvarset_1 \cup 
\inputvarset_2$ for the input variables. Note that some variables may be shared 
by both processes. Variables that are not shared between processes will be called \emph{local} variables. 

We obtain a refinement of a process by resolving some of the 
non-determinism introduced by input variables, and possibly extending the 
sets of local state variables. Formally, let $\controlvarset_i \subseteq 
\inputvarset_i$ be a set of \emph{controllable variables}, let $\inputvarset
_i' =  \inputvarset_i \setminus \controlvarset_i$, and let $\statevarset_i' 
\supseteq \statevarset_i$ be 
an extended (finite) set of state variables, with $\statevarset_1' \cap \statevarset_2' = \statevarset_1 \cap \statevarset_2$.
Then a \emph{refinement} of process $P_i = (\statevarset_i,\obsvarset_i,
\inputvarset_i, \trans_i)$ \emph{with respect to $C_i$} is a process $P_i' = 
(\statevarset_i',\obsvarset_i, \inputvarset'_i, \trans_i')$ with a  
transition function $\trans_i': \valsetof{\statevarset_i'}  \times 
\valsetof{\obsvarset_i} \times \valsetof{\inputvarset'_i} \rightarrow 
\valsetof{\statevarset_i'}$ such that for all $\state \in \valsetof{\statevarset_i'}
, \obsval \in \valsetof{\obsvarset_i}, \inputval \in 
\valsetof{\inputvarset'_i}$ there exists $\controlval \in 
\valsetof{\controlvarset_i}$ with
\[ \trans_i'(\state,\obsval,\inputval)\restrict{\statevarset_i} = 
\trans_i(\state\restrict{\statevarset_i}, \obsval, \inputval \ccat \controlval). \]
%
%
We write $P_i' \preceq P_i$ to denote that $P_i'$ is a refinement of $P_i$. 

\smallskip
\noindent
\textbf{Important modeling aspects.}
Local variables are used to model partial information: all decisions of a process need to be independent of the variables that are local to the other process.
Furthermore, variables in $\statevarset_i' \setminus \statevarset_i$ are used to model additional memory that a process can use to store observed information.
We say a refinement is \emph{memoryless} if $\statevarset_i' = \statevarset_i$, 
and it is \emph{$\bound$-bounded} if $\card{\statevarset_i' \setminus 
\statevarset_i} \leq \bound$.  

\smallskip
\noindent
\textbf{Schedulers, executions.}
A \emph{scheduler} for processes $P_1$ and $P_2$ chooses at each computation 
step whether $P_1$ or $P_2$ can take a step to update its variables. Let 
$\allvar_1,\allvar_2$ be the sets of all variables (state, memory, input) of $
P_1$ and $P_2$, respectively, and let $\allvar = \allvar_1 \cup 
\allvar_2$. Let furthermore $V=\valsetof{\allvar}$ be the set of global 
valuations.
Then, the scheduler is a function $\sched: V^* \rightarrow \{1,2\}$ 
that maps a finite sequence of global valuations to a process index $i \in \{1
,2\}$. Scheduler $\sched$ is 
\emph{fair} if for all traces $(v_0,v_1,\ldots) \in V^\omega$ it assigns 
infinitely many turns to both $P_1$ and $P_2$, i.e., there are infinitely 
many $j\geq 0$ such that $\sched(\val_0,\ldots,\val_j)=1$, and infinitely 
many $k\geq 0$ such that $\sched(\val_0,\ldots,\val_k)=2$.

Given two processes $P_1,P_2$, a scheduler $\sched$, and a start valuation $
\val_0$, the set of possible \emph{executions} of the parallel composition 
$P_1 \parallel P_2 \parallel \sched$ is 
\[ 
\llbracket P_1 \parallel P_2 \parallel \sched,\val_0 \rrbracket = \left\{ 
(\val_0,\val_1,\ldots) \in V^\omega \left| 
\begin{array}{l} \forall j \geq 0.\ \sched(\val_0,\val_1,\ldots,\val_j)=i \\ 
\textrm{ and } \val_{j+1}\restrict{(\allvar \setminus \allvar_i)} = 
\val_{j}\restrict{(\allvar \setminus \allvar_i)} \\
\textrm{ and } \val_{j+1}\restrict{\allvar_i \setminus \inputvarset_i} \in 
\trans_i(\val_{j}\restrict{\allvar_i}) 
\end{array}\right.\right\}. \]
That is, at every turn the scheduler decides which of the processes makes 
a transition, and the state and memory variables are updated according to the 
transition function of that process.
Note that during turns of process $P_i$, the values of local
variables of the other process (in $\allvar \setminus \allvar_i$) remain unchanged.

\smallskip
\noindent
\textbf{Safety, GR(1), LTL.}
A specification $\spec$ is a set of traces on $\statevarset \cup \inputvarset$.
We consider $\omega$-regular specifications, in particular the following 
fragments of LTL:\footnote{For a definition of syntax and semantics of LTL, 
see e.g.~\cite{ModelCheckBook}.}
\begin{itemize}
\item \emph{safety properties} are of the form $\always B$, where $B$ is a 
Boolean formula over variables in $\statevarset \cup \inputvarset$,
defining a subset of valuations that are safe.
\item \emph{GR(1) properties} are of the form
$ \left( \bigwedge_i \always \eventually L_e^i\right) \rightarrow \left( 
\bigwedge_j \always \eventually L_s^j\right)$,
where the $L_e^i$ and $L_s^j$ are Boolean formulas over $\statevarset \cup 
\inputvarset$.
\item \emph{LTL properties} are given as arbitrary LTL formulas over $
\statevarset \cup \inputvarset$. They are a 
subset of the $\omega$-regular properties.
\end{itemize}

\noindent
\textbf{Co-Synthesis.} \label{sec:cosydef}
In all co-synthesis problems, the input to the problem is 
given as: two processes $P_1,P_2$ with $P_i=(\statevarset_i,\obsvarset_i,\inputvarset_i,
\trans_i)$, two sets $\controlvarset_1,\controlvarset_2$ of controllable 
variables with $\controlvarset_i \subseteq \inputvarset_i$, two 
specifications $\spec_1, \spec_2$, and 
a start valuation $\val_0 \in \valsetof{\statevarset \cup \inputvarset}$, 
where $\inputvarset = \inputvarset_1 \cup \inputvarset_2$.

\smallskip
\noindent
\textit{Cooperative co-synthesis.}
The \emph{cooperative co-synthesis problem} is defined as follows: do there 
exist two processes $P_1' \preceq P_1$ and $P_2' \preceq P_2$, and a 
valuation $\val_0'$ with $\val_0' \restrict{\statevarset \cup \inputvarset} = 
\val_0$, such that for all fair schedulers $\sched$ we have 
$$\llbracket P_1' \parallel P_2' \parallel \sched,\val_0' \rrbracket \restrict
{\statevarset \cup \inputvarset} \subseteq \spec_1 \land \spec_2?$$

\noindent
\textit{Competitive co-synthesis.}
The \emph{competitive co-synthesis problem} is defined as follows: do there 
exist two processes $P_1' \preceq P_1$ and $P_2' \preceq P_2$, and a 
valuation $\val_0'$ with $\val_0' \restrict{\statevarset \cup \inputvarset} = 
\val_0$, such that for all fair schedulers $\sched$ we have 
\begin{enumerate}[(i)]
\item $\llbracket P_1' \parallel P_2 \parallel \sched,\val_0' \rrbracket \restrict{\statevarset 
\cup \inputvarset} \subseteq \spec_1$, and 
\item $\llbracket P_1 \parallel P_2' 
\parallel \sched,\val_0' \rrbracket \restrict{\statevarset \cup \inputvarset} 
\subseteq \spec_2$?
\end{enumerate}

\noindent
\textit{Assume-guarantee synthesis.}
The \emph{assume-guarantee synthesis (\ags) problem} is defined as follows: do there exist two processes $P_1' \preceq P_1$ and $P_2' \preceq P_2$, and a valuation $\val_0'$ with $\val_0' \restrict{\statevarset \cup \inputvarset} = \val_0$, such that for all fair schedulers $\sched$ we have 
\begin{enumerate}[(i)]
\item $\llbracket P_1' \parallel P_2 \parallel \sched,\val_0' \rrbracket \restrict{\statevarset \cup \inputvarset} \subseteq \spec_2 \impl \spec_1$, 
\item $\llbracket P_1 \parallel P_2' \parallel \sched,\val_0' \rrbracket \restrict{\statevarset \cup \inputvarset} \subseteq \spec_1 \impl \spec_2$, and 
\item $\llbracket P_1' \parallel P_2' \parallel \sched,\val_0' \rrbracket \restrict{\statevarset \cup \inputvarset} \subseteq \spec_1 \land \spec_2$?
\end{enumerate}

\noindent
We refer the reader to \cite{ChatterjeeH07} for more intuition and a detailed discussion of AGS. 

\smallskip
\noindent
\textit{Informedness and boundedness.} A synthesis problem is under \emph{perfect information} if $\statevarset_i \cup \obsvarset_i = \statevarset$ for $i \in \{1,2\}$, and $\inputvarset_1 = \inputvarset_2$. That is, both processes have knowledge about all variables in the system.
Otherwise, it is under \emph{partial information}.
A synthesis problem is \emph{memoryless} (or $\bound$-bounded) if we additionally require that $P_1',P_2'$ are memoryless (or $\bound$-bounded) refinements of $P_1,P_2$.

\smallskip
\noindent
\textit{Optimization criteria.}
Let $\mathcal{P}$ be the set of all processes. A cost function is a function
$\cost: \mathcal{P} \times \mathcal{P} \rightarrow \bbN$ that assigns a cost to
a tuple of processes.  In our approach, we will use cost functions to optimize synthesis
results.

\smallskip
\noindent
\textit{Note on robustness against modifications.}
Suppose $P_1',P_2'$ are the result of \ags on a given input, including 
specifications $\spec_1,\spec_2$. The properties of \ags allow us to replace 
one of the processes, say $P_2$:
if the replacement of $P_2'$ satisfies $\spec_2$, then the overall system will still be correct. If we furthermore ensure that conditions ii) and iii) of \ags are satisfied, then the resulting solution is again an \ags solution, i.e., we can go on and refine another process.

\smallskip
\noindent
\textit{Co-synthesis of more than $2$ processes.}
The definitions above naturally extend to programs with more than $2$ 
concurrent processes, cp.~\cite{CR14} for \ags with $3$ processes.

\section{Complexity and Decidability of \ags}

\ifextended

We analyze the complexity of AGS, based on a reduction to graph-based games.

\subsection{Game Graphs for Co-Synthesis}
All synthesis problems defined thus far can be reduced to problems about games played on graphs with three players.

\paragraph{Game graphs.} \hspace{-2.3mm}
A \emph{$3$-player game graph} $G=((S,E),(S_1,S_2,S_3))$ consists of a
directed graph $(S,E)$ with a finite set $S$ of states and a set $E \subseteq S \times S$ of edges, and a partition $(S_1,S_2,S_3)$ of the state space $S$ into three sets. The states in $S_i$ are player-$i$ states, for $i \in \{1,2,3\}$. For a state $s \in S$, we write $E(s) = \{t \in S | (s,t) \in E\}$ for the set of successor states of $s$. We assume that every state has at least one outgoing edge; i.e., $E(s)$ is nonempty for all states $s \in S$. 

Beginning from a start state, the three players move a token along the edges of the game graph. If the token is on a player-$i$ state $s \in S_i$, then player $i$ moves the token along one of the edges going out of $s$. The result is an infinite path in the game
graph; we refer to such infinite paths as plays. Formally, a \emph{play} is an infinite
sequence $(s_0,s_1,s_2,\ldots)$ of states such that $(s_k,s_{k+1}) \in E$ for all $k \geq 0$. We write $\Omega$ for the set of plays. 

\paragraph{Strategies.}
A strategy for a player is a recipe that specifies how to extend plays.
Formally, a \emph{strategy} $\sigma_i$ for player $i$ is a function $\sigma_i 
: S^* \cdot S_i \rightarrow S$ that, given a finite sequence of states 
(representing the history of the play so far) which ends in a player-$i$ 
state, chooses the next state. The strategy must choose an available 
successor state; i.e., for all $w \in S^*$ and $s \in S_i$, if $\sigma_i (w 
\cdot s) = t$, then $t \in E(s)$.
We write $\Sigma_i$ for the set of strategies for player $i$. 

Strategies in general require memory to remember some facts about the history 
of a play. An equivalent definition of strategies is as follows: Let $M$ be a 
set called \emph{memory}.  A strategy $\sigma=(f,\memtrans)$ 
consists of (1) a next-state function 
$f: S \times M \rightarrow  S$ that, given the memory and the current 
state, determines the successor state, and (2) a memory-update function 
$\memtrans : S \times M \rightarrow M$  that, given the memory and the current 
state, updates the memory. 

The strategy $\sigma=(f,\memtrans)$ is \emph{finite-memory} if the 
memory $M$ is finite. It is \emph{$b$-bounded} if $2^b \geq \card{M}$, and 
\emph{memoryless} if $M$ is a singleton set (i.e., $b=0$). Memoryless 
strategies do not depend on the history of a play, but only on the current 
state. A memoryless strategy for player $i$ can be specified as a function $
f_i : S_i \rightarrow S$ such that $f_i(s) \in E(s)$ for all $s \in 
S_i$. Given a start state $s_0 \in S$ and three strategies $\sigma_i \in 
\Sigma_i$, one for each of the three players $i \in \{1,2,3\}$, there is a 
unique play, denoted $\omega(s_0,\sigma_1,\sigma_2,\sigma_3) = 
(s_0,s_1,s_2,\ldots)$, such that for all $k \geq 0$, if $s_k \in
S_i$, then $\sigma_i(s_0,s_1,\ldots,s_k) = s_{k+1}$; this play is the outcome 
of the game starting at $s_0$ given the three strategies $\sigma_1, \sigma_2$, 
and $\sigma_3$.

In a partial information setting, players may not be able to make decisions 
based on the full state of the game, but only with respect to the observed 
state. Formally, let $\observations$ be a set of observations. A \emph{
partial information strategy} with respect to an observation function $
\observe: S \rightarrow \observations$ is a strategy $\sigma$ with $\sigma(s_0
,s_1,\ldots,s_k)=\sigma(s_0',s_1',\ldots,s_k')$ whenever $\observe(s_i)=
\observe(s_i')$ for all $i$.

\paragraph{Winning.}
An objective $\Psi\subseteq \Omega$ is a set of plays; i.e., $\Psi \subseteq \Omega$. The following notation is derived from ATL~\cite{AlurHK02}. For an objective $\Psi$, the set of \emph{winning states} for player $1$ in the game graph $G$ is $\llangle 1 \rrangle_G (\Psi) = \{s\ {\in}\ S \mid \exists\ \sigma_1\ {\in}\ \Sigma_1.\ \forall\ \sigma_2\ {\in}\ \Sigma_2.\ \forall\ \\ \sigma_3\ {\in}\ \Sigma_3.\ \omega(s,\sigma_1,\sigma_2,\sigma_3)\ {\in}\ \Psi\}$; a witness strategy $\sigma_1$ for player $1$ for the existential quantifier is referred to as a \emph{winning strategy}.
The winning sets $\llangle 2 \rrangle_G (\Psi)$ and $\llangle 3\rrangle_G (\Psi)$ for players $2$ and $3$ are defined analogously. The set of winning states
for the team consisting of player $1$ and player $2$, playing against player $3$, is
$\llangle1,2\rrangle_G (\Psi) = \{s\ {\in}\ S \mid \exists\ \sigma_1\ {\in}\ \Sigma_1.\ \exists\ \sigma_2\ {\in}\ \Sigma_2.\ \forall\ \sigma_3\ {\in}\ \Sigma_3.\ \omega(s,\sigma_1,\sigma_2,\sigma_3)\ {\in}\ \Psi\}$. The
winning sets $\llangle I\rrangle_G (\Psi)$ for other teams $I \subseteq \{1,2,3\}$ are defined similarly. 


\paragraph{Games based on processes and specifications.}
Given two processes $P_1, P_2$ with $P_i = (\statevarset_i,\obsvarset_i,
\inputvarset_i,
\trans_i)$ and respective sets of controllable variables $\controlvarset_i 
\subseteq \inputvarset_i$, we define the $3$-player game graph $G =
((S,E),(S_1,S_2,S_3))$ as follows: let $S = V \times \{1,2,3\}$; let $S_i = V 
\times \{i\}$ for
$i \in \{1,2,3\}$; and let $E$ contain (1) all edges of the form $((v,3),(u,i
))$ for $i \in \{1,2\}$, $v \in V$ and $u \restrict{\statevarset \cup 
\controlvarset} = v \restrict{\statevarset \cup \controlvarset}$, and (2) all 
edges
of the form $((v,i),(u,3))$ for $i \in \{1,2\}$ and $u \restrict \statevarset_
i \in \trans_i(v \restrict{\statevarset_i}, v\restrict{\obsvarset_i}, v 
\restrict{\inputvarset_i})$ and 
$u \restrict{\allvar \setminus (\statevarset_i \cup \controlvarset_i)} = v 
\restrict{\allvar \setminus (\statevarset_i \cup \controlvarset_i)}$.
In other words, player $1$ represents process $P_1$,
player $2$ represents process $P_2$, and player $3$ represents the 
environment, including the scheduler. Given a
play of the form $\omega = ((v_0,3),(v_0',i_0 ),(v_1,3),(v_1',i_1),(v_2,3),
\ldots)$, where $i_j \in \{1,2\}$ for all $j \geq 0$, we write $[\omega]_{1,2
}$ for the sequence of valuations $(v_0',v_1',v_2',\ldots)$ in $\omega$
(ignoring the intermediate valuations at player-3 states).\footnote{Note that 
$v_j$ differs from $v_j'$ only in the valuation of input variables, and $v_j'$
 differs from $v_{j+1}$ only in the valuation of variables in 
$\statevarset_{i_j} \cup \controlvarset_{i_j}$, controlled by process $i_j$.} 

A given specification $\spec \subseteq
 V^\omega$ defines the objective $[[\spec]] = \left\{\omega \in \Omega | [
\omega]_{1,2} \in \spec\right\}$. In this way, the specifications
$\spec_1$ and $\spec_2$ for the processes $P_1$ and $P_2$ provide the 
objectives $\Psi_1 = [[\spec_1]]$ and $\Psi_2 = [[\spec_2]]$ for players $1$ 
and $2$, respectively. The objective for player 3 (the environment) is the 
fairness objective $\Psi_3 = \fair$ that both $S_1$ and $S_2$ are visited
infinitely often; i.e., $\fair$ contains all plays $(s_0,s_1,s_2,\ldots) \in 
\Omega$ 
such that $s_j \in S_1$ for infinitely many $j \geq 0$, and $s_k \in S_2$ for 
infinitely many $k \geq 0$.

\paragraph{Game solutions to co-synthesis problems~\cite{ChatterjeeH07}.}
Based on a game graph as defined above, the cooperative co-synthesis problem for $P_1, P_2, \controlvarset_1, \controlvarset_2$, a start valuation $v_0$ and specifications $\spec_1, \spec_2$ is equivalent to finding a winning strategy for the team of players $1$ and $2$ from start valuation $v_0$, and the objective $\Psi = [[\fair \rightarrow \spec_1 \land \spec_2]]$. The corresponding competitive co-synthesis problem is equivalent to finding separate strategies for players $i \in \{1,2\}$ for this game graph from start valuation $v_0$, and the respective objective $\Psi_i = [[\fair \rightarrow \spec_i]]$. 

For the \ags problem for $P_1, P_2, \controlvarset_1, \controlvarset_2$, a 
start valuation $v_0$ and specifications $\spec_1, \spec_2$, consider the 
following:
\begin{enumerate}
\item let $U_i = \llangle i \rrangle_G (\fair \rightarrow \Psi_i)$ be the 
winning states for process $i$, based on a fair scheduler,
\item let $F_i = \llangle i, 3 \rrangle_{G\restrict{U_i}} (\fair \land \Psi_i 
\land \neg \Psi_{3-i})$ be the set of states where the team of players $i$ 
and $3$ can win the game and force the other player to lose the game, and 
\item let $W = \llangle 1, 2 \rrangle_{G\restrict{S \setminus \left(F_1 \cup F
_2\right)}} (\fair \rightarrow \left( \Psi_1 \land \Psi_2)\right)$ be the set 
of states where both players $1$ and $2$ can win the game, but not force the 
other to lose it, based on a fair scheduler.
\end{enumerate}
Then the \ags problem is equivalent to finding strategies $\sigma_1, \sigma_2$
 for players $1$ and $2$, respectively, such that:
\begin{enumerate}
\item player $i$ wins the game with objective $(\fair \land \Psi_{3-i}) 
\rightarrow \Psi_i$ from all states in $U_i$: $\forall \sigma_{3-i}'.\ \forall
 \sigma_3.\ \forall s \in U_i.\ \omega(s,\sigma_i,\sigma_{3-i}',\sigma_3) \in 
((\fair \land \Psi_{3-i}) \rightarrow \Psi_i)$,
\item the team of players $1$ and $2$ wins the game with objective $\fair 
\rightarrow (\Psi_1 \land \Psi_2)$ from states $W \setminus (U_1 \cup U_2)$, 
and 
\item $v_0 \in W$.
\end{enumerate}
Formally, solving the AGS problem reduces to solving games with secure 
equilibria~\cite{ChatterjeeH07}.

\subsection{Complexity Results}

Table~\ref{tab:complexity} gives an overview of the complexity of 
\ags.
The complexity results are with respect to the size of the input, where the 
input consists of the game graph given explicitly, and the specification 
formula (i.e., the size of the input is the 
size of the explicit game graph and the length of the formula). 

\begin{table}
\centering
\begin{tabular}{|l|c|c|c|c|}
\hline
& \multicolumn{2}{|c|}{Memoryless} & \multicolumn{2}{|c|}{General} \\
\hline
& Perfect & Partial & Perfect & Partial \\
\hline
Safety & P & NP-C & P & Undec\\
GR(1) & NP-C & NP-C & P & Undec\\
LTL & PSPACE-C & PSPACE-C & 2EXP-C & Undec\\
\hline
\end{tabular}
\caption{Complexity of Assume-Guarantee Synthesis}
\label{tab:complexity}
\end{table}

\else

We give an an overview of the complexity of \ags.
The complexity results are with respect to the size of the input, where the 
input consists of the state transition system, and the specification 
formula (i.e., the size of the input is the 
size of the explicit state transition system and the length of the formula).

\begin{theorem}\label{thm:complexity}
The complexity of \ags is given in the following table:

\centering
\begin{tabular}{|l|c|c|c|c|}
\hline
& \multicolumn{2}{|c|}{Bounded Memory} & \multicolumn{2}{|c|}{~Unbounded Memory~} \\
\hline
& Perfect Inf. & Partial Inf. & Perfect Inf. & ~Partial Inf.~ \\
\hline
~Safety~ & P & NP-C & P & Undec\\
~GR(1) & NP-C & NP-C & P & Undec\\
~LTL & ~PSPACE-C~ & ~PSPACE-C~ & ~2EXP-C~ & Undec\\
\hline
\end{tabular}

\end{theorem}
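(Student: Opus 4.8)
The plan is to prove the twelve entries of the table by combining (a) reductions to and from the three-player game problems described in the previous subsection, (b) known results for these games, and (c) dedicated lower-bound constructions for the partial-information cells. I would organize the argument by columns. For the two \emph{perfect information} columns I would invoke the game characterization verbatim: the AGS problem reduces to solving a game with a secure equilibrium, which by the reduction given above decomposes into (i) two single-player winning-region computations $U_i$ with objective $\fair\rightarrow\Psi_i$, (ii) two cooperative computations $F_i$ with objective $\fair\land\Psi_i\land\neg\Psi_{3-i}$ played by a two-player team, and (iii) one cooperative computation $W$ for the team $\{1,2\}$ against player $3$ with objective $\fair\rightarrow(\Psi_1\land\Psi_2)$. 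Under perfect information all of these are ordinary graph games, so the complexity is governed by the specification type: for safety and GR(1) objectives, solving games (and the Boolean combinations appearing in $F_i$, which are themselves still within a fixed low level of the Borel/parity hierarchy — a GR(1)-style objective conjoined with its negation yields a Rabin/Streett condition of bounded index) is in polynomial time, giving the \textbf{P} entries; for full LTL, translating to a deterministic parity automaton is doubly exponential and solving the product parity game is then polynomial in that product, giving \textbf{2EXP}. The matching lower bounds come from the corresponding hardness of plain synthesis/realizability (LTL realizability is 2EXP-hard~\cite{PnueliR89}); note that the perfect-information column does not distinguish bounded from unbounded memory because finite-memory (indeed, bounded-memory) strategies suffice for $\omega$-regular games, so the ``Memoryless/Perfect'' column inherits the same bounds — except that restricting to \emph{genuinely} memoryless refinements turns GR(1) games into an NP search over memoryless strategies, which is NP-complete, and likewise pushes LTL to PSPACE via an NP$^{\text{PSPACE}}$-style guess-and-check that collapses to PSPACE.

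For the \emph{bounded-memory, partial information} column the upper bound is always \textbf{NP} (for safety and GR(1)) or \textbf{PSPACE} (for LTL): since the memory bound $b$ is fixed, a refinement $P_i'$ is an object of polynomial size, so one can nondeterministically guess both refinements $P_1',P_2'$ and the extended start valuation $v_0'$, and then \emph{verify} conditions (i)–(iii) of the AGS definition. Verification is a model-checking problem on the (polynomial-size) product of the two fixed transducers against $\spec_1,\spec_2$: for safety and GR(1) this is in P, yielding NP overall; for LTL it is PSPACE-complete, and NP$^{\text{PSPACE}}=$ PSPACE gives the PSPACE entry. NP-hardness for safety under partial information is the crux of this column and, I expect, the main obstacle: I would reduce from a suitable NP-hard problem — the natural candidate is the known NP-hardness of distributed/partial-information synthesis for safety with a bounded number of states, or a direct encoding of, say, 3-SAT, where the controllable input variables of a process play the role of a truth assignment that must be chosen consistently across observationally-indistinguishable states (partial information is exactly what forces the ``same choice in indistinguishable situations'' constraint that encodes the propositional structure). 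One has to be careful that the assume-guarantee structure (the three conditions, in particular the ``force the other to lose'' sets $F_i$) does not trivialize the reduction; the cleanest route is to pick $\spec_2$ so that player $2$'s process is essentially trivial and conditions (i) and (ii) are vacuous, so that the problem collapses to an ordinary bounded-memory partial-information safety-synthesis problem for $P_1'$, whose NP-hardness is then the thing to establish or cite. The PSPACE-completeness for LTL in both bounded columns then follows from PSPACE-hardness of LTL model checking / bounded synthesis, transported through the same skeleton.

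For the \emph{unbounded memory, partial information} column the claim is undecidability for all three fragments, and since safety is the weakest, it suffices to prove it for safety — the GR(1) and LTL entries follow a fortiori (safety is expressible in both). The plan is a reduction from an undecidable problem about partial-information multi-process synthesis — the canonical source being the undecidability of distributed synthesis / Pnueli–Rosner architectures~\cite{PnueliR90,FinkbeinerS05}, or equivalently the halting/emptiness problem for two communicating partially-informed finite-state machines with unbounded local memory, which is undecidable because with unbounded memory and partial information the two refinements can simulate the tapes of a Turing machine while the safety specification checks that no illegal configuration transition occurs. Concretely I would set up $P_1$ and $P_2$ so that their local (hidden) state variables, extended arbitrarily in the refinement, encode the two halves of a Turing-machine tape; the scheduler (player $3$) drives the simulation; and $\spec_1=\spec_2=\always B$ where $B$ forbids exactly the configurations that witness an illegal step or a reject. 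Then an AGS solution exists iff the machine never reaches a rejecting configuration, i.e., iff a non-halting (or non-accepting) computation exists — undecidable. The one subtlety, again, is the three-condition AGS wrapper: I would again arrange $\spec_1,\spec_2$ and the process structure so that conditions (i) and (ii) hold automatically (e.g., because each $\spec_{3-i}$ is falsifiable by the environment along the relevant plays, making the implications $\spec_{3-i}\rightarrow\spec_i$ trivially satisfiable), so that the only real obligation is condition (iii), which is precisely cooperative co-synthesis under partial information with unbounded memory — and that is the classically undecidable object. The heavy lifting, and the step most likely to require genuine care rather than bookkeeping, is this last encoding: making the Turing-machine simulation robust against \emph{all} fair schedulers and ensuring the partial-information constraints are exactly strong enough (neither process may peek at the other's tape half) while still permitting the intended simulation.
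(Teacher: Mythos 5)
Your overall strategy coincides with the paper's: reduce \ags to three-player games (the secure-equilibria decomposition into the sets $U_i$, $F_i$, $W$) for the perfect-information columns, use guess-and-check on polynomial-size refinements plus model checking for the bounded-memory columns (NP for safety and GR(1), PSPACE for LTL via NPSPACE${}={}$PSPACE), take 2EXP-hardness from LTL realizability and PSPACE-hardness from LTL model checking, and prove undecidability only for safety in the unbounded/partial-information column. Two of your choices genuinely diverge from the paper, both legitimately. For undecidability the paper simply cites the undecidability of three-player partial-observation games (Peterson, extended to safety objectives), whereas you re-derive it by a direct Turing-machine simulation in which the two hidden unbounded memories encode the tape and conditions (i)--(ii) of \ags are made vacuous so that only cooperative co-synthesis remains; this is essentially the argument behind the cited results, so your route is heavier but self-contained. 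For NP-hardness of memoryless safety under partial information the paper cites a known result, while you sketch a direct 3-SAT encoding exploiting observational indistinguishability --- the right idea, but precisely the step you would still have to carry out or cite.

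The one genuine gap is the GR(1) row of the bounded-memory columns. You assert that restricting to memoryless refinements ``turns GR(1) games into an NP search \dots which is NP-complete,'' but you give no lower bound, and none of your other hardness sources supplies one: perfect-information GR(1) games with unrestricted memory are in P, so hardness cannot come from the logic, and under perfect information it cannot come from hidden variables either --- it must come from the memoryless restriction itself. The paper closes this with a specific nontrivial fact (Fortune--Hopcroft--Wyllie): even in a one-player graph, deciding whether there is a memoryless strategy (equivalently, a suitable path) visiting two designated states infinitely often is NP-hard, by reduction from directed subgraph homeomorphism; since visiting two states infinitely often is a conjunction of two B\"uchi conditions, hence a special case of GR(1), NP-hardness follows, and the paper reuses exactly this lower bound for the partial-information GR(1) cell. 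Without this (or some substitute) your table has no justification for the NP-completeness entry in the bounded-memory/perfect-information GR(1) cell, and the partial-information GR(1) hardness is also left hanging, since under the paper's definition a safety formula is not syntactically a GR(1) property, so it does not transfer ``for free'' from your safety reduction without an additional encoding step.
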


\noindent
Formal definitions of three-player games, and proof ideas for these complexity results by reduction to three-player games can be found in Appendix~\ref{sec:complexity}.

\fi
Note that the complexity classes for memoryless \ags  are the same as 
for \ags with bounded memory --- the case of bounded memory reduces to the 
memoryless case, by considering a game that is larger by a constant factor: 
the given bound.

Also note that if we consider the results in the order given by the 
columns of the table, they form a non-monotonic pattern:
(1)~For safety objectives the complexity increases and then decreases (from 
PTIME to NP-complete to PTIME again);
(2)~for GR(1) objectives it remains NP-complete and finally decreases to 
PTIME; and
(3)~for LTL it remains PSPACE-complete and then increases to 2
EXPTIME-complete.

\ifextended
We will explain these results in the following.

\subsubsection{Memoryless AGS, Perfect Information.}

The following Theorem justifies the results in the first column of
\ifextended
Table~\ref{tab:complexity}.
\else
the table in Theorem~\ref{thm:complexity}.
\fi

\begin{theorem}
\label{thm:compl-memless-perf}
The complexity of memoryless AGS under perfect information is 
\begin{enumerate}[i)]
\item \label{thm:compl-memless-perf-safe} polynomial for safety properties,
\item \label{thm:compl-memless-perf-gr1} NP-complete for GR(1) properties, and
\item \label{thm:compl-memless-perf-ltl} PSPACE-complete for LTL properties.
\end{enumerate}
\end{theorem}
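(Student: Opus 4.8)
The plan is to prove the three complexity bounds for memoryless AGS under perfect information by reducing to the three-player game formulation from the previous subsection, and then analyzing the complexity of solving the relevant games with memoryless strategies. Recall from the excerpt that AGS reduces to finding strategies $\sigma_1,\sigma_2$ satisfying conditions (1)--(3) built from the sets $U_i$, $F_i$, and $W$; in the perfect-information memoryless case, all strategies quantified over are memoryless, so we must count the cost of (a) computing $U_i$, (b) computing $F_i$, (c) computing $W$, and (d) verifying that a guessed pair of memoryless strategies meets the three winning conditions. Since the game graph $G$ has size polynomial in the input (it is $V \times \{1,2,3\}$ with the edge relation derived from the transition functions), the only question is the per-objective cost of these game-solving and game-checking steps.

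\textbf{Part \ref{thm:compl-memless-perf-safe}: safety, polynomial.} For safety objectives the sets $U_i$, $F_i$, and $W$ are each the winning region of a game whose objective is a Boolean combination of safety and (co-)Büchi/fairness conditions --- concretely, $\fair \rightarrow \Psi_i$ is a generalized Büchi / co-Büchi-style condition, $\fair \wedge \Psi_i \wedge \neg\Psi_{3-i}$ is a Streett-1-like condition, and all of these are solvable in polynomial time on explicit graphs (safety games, Büchi games, and their one-pair combinations are all in PTIME). Because perfect-information games with these objectives admit memoryless winning strategies (or strategies with a fixed constant amount of memory that can be absorbed into a polynomially larger graph), solving for the witness strategies is also polynomial. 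So every step runs in PTIME and the overall bound is polynomial. I expect this case to be essentially routine once the game-solving subroutines are cited.

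\textbf{Parts \ref{thm:compl-memless-perf-gr1} and \ref{thm:compl-memless-perf-ltl}: NP- and PSPACE-completeness.} For the \emph{upper bounds}, guess the two memoryless strategies $f_1 : S_1 \to S$ and $f_2 : S_2 \to S$ (polynomial-size certificates), fix them in the game graph, and check conditions (1)--(3). Under GR(1) the check amounts to model-checking a GR(1)/one-pair-Streett condition against the (now only player-3-controlled) graph --- doable in PTIME --- and computing $U_i, F_i, W$ themselves is in PTIME as in part (i); hence the whole procedure is in NP. Under LTL, fixing the memoryless strategies still leaves an LTL model-checking problem on a graph of the product with the game, which is in PSPACE; the sets $U_i, F_i, W$ require solving LTL games, which is in 2EXPTIME in general but under \emph{memoryless} restriction collapses: one can enumerate memoryless strategies in PSPACE (polynomial-space counter over the strategy space) and for each check the LTL condition in PSPACE, so the memoryless game-solving is in PSPACE as well, giving an overall PSPACE upper bound. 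The \emph{lower bounds} come from standard reductions: GR(1)-hardness (NP-hardness) follows by reducing SAT, encoding a choice of truth assignment as the memoryless choices of player 1 on a gadget graph and the clauses as the GR(1) acceptance condition, essentially the reduction used for memoryless Streett/GR(1) games; PSPACE-hardness follows from the PSPACE-hardness of LTL satisfiability/model checking, since AGS subsumes ordinary reactive synthesis (take $\spec_2 = \tttrue$, make player 2 trivial), and with a memoryless bound LTL synthesis on an explicit graph is already PSPACE-hard.

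\textbf{Main obstacle.} The subtle point --- and the part I'd spend the most care on --- is the interaction between the \emph{memoryless} restriction and the \emph{nested} structure of the AGS solution (the sets $F_i$ are computed in the subgame $G\restriction_{U_i}$, and $W$ in $G\restriction_{S\setminus(F_1\cup F_2)}$). I need to argue that when we require $P_1', P_2'$ to be memoryless refinements, it suffices to look for \emph{memoryless} witness strategies $\sigma_1,\sigma_2$ for all three conditions simultaneously, and that the auxiliary regions $U_i, F_i, W$ can still be computed exactly (these are just winning regions, and winning regions do not change when we allow the solver full memory --- only the final synthesized strategies $\sigma_1,\sigma_2$ are memory-restricted, because they correspond to the actual processes). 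Making this separation precise --- memory-restriction applies to the output strategies but not to the region computations or to the adversaries $\sigma_{3-i}', \sigma_3$ quantified universally --- is where the argument has to be stated carefully, and it is also what makes the GR(1) case jump to NP-complete (the region computations stay in PTIME, but guessing the memoryless output strategies is the source of the nondeterminism). The remaining lower-bound reductions are standard and I would only sketch them, citing the analogous results for three-player games and for memoryless Streett games.
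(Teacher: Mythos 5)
Your guess-and-check skeleton for the NP and PSPACE upper bounds matches the paper's, and your lower-bound sketches are acceptable variants of the paper's: for GR(1) the paper cites the NP-hardness (via directed subgraph homeomorphism) of finding a memoryless strategy in a \emph{graph} that visits two given states infinitely often, which is a conjunction of B\"uchi objectives and hence a GR(1) special case; for LTL it takes a game in which player~1 has exactly one outgoing edge per state, so memoryless AGS degenerates to LTL model checking and PSPACE-hardness is immediate. Your SAT-gadget reduction and your ``AGS subsumes synthesis'' argument are in the same spirit and only sketched, so I will not quarrel with them.

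The genuine gap is in your verification step. You propose to check the guessed memoryless pair against the game-theoretic characterization of AGS, i.e.\ conditions (1)--(3) built from the regions $U_i$, $F_i$, $W$, instead of against the definition of AGS itself. This creates two problems. First, that characterization is established in~\cite{ChatterjeeH07} for unrestricted strategies; you would have to prove it remains sound \emph{and} complete when the witness strategies $\sigma_1,\sigma_2$ are required to be memoryless, and this is doubtful in the direction you need: a memoryless pair that satisfies the AGS definition from $\val_0$ need not win from \emph{all} states of $U_i$ (a region computed allowing arbitrary memory), so your check could reject genuine memoryless solutions. Second, for LTL your treatment of the regions is internally inconsistent and destroys the complexity bound: in your ``main obstacle'' paragraph you insist the regions be computed with unrestricted strategies, but then computing $U_i=\llangle i\rrangle_G(\fair\rightarrow\Psi_i)$ for an LTL objective is an LTL game, a 2EXPTIME-complete problem, not PSPACE; in the LTL paragraph you instead enumerate memoryless strategies, which changes the regions and again voids the characterization you rely on. The paper avoids all of this: after guessing the memoryless refinements it verifies the three trace-containment conditions of the AGS definition directly, which is GR(1) model checking of a graph (polynomial, giving membership in NP) or LTL model checking (PSPACE, giving PSPACE by Savitch's theorem); no region computation is needed. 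Similarly, the safety case is handled in the paper through the secure-equilibria results for safety games, where memoryless strategies suffice, so memoryless AGS coincides with general AGS and is polynomial---rather than through ad hoc claims about fairness-implication objectives and ``absorbing'' constant memory, which is not permissible when the output itself must be memoryless over the original state space.
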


\begin{proof}
We present the proof of the three items below.

Item \ref{thm:compl-memless-perf-safe}:
It was shown in~\cite{ChatterjeeH07} that AGS solutions can be obtained from the solutions of games with 
secure equilibria. It follows from the results of~\cite{CHJ06} that for games with safety objectives,
the solution for secure equilibria reduces to solving games with safety and reachability objectives 
for which memoryless strategies suffice (i.e., memoryless strategies are as powerful as arbitrary 
strategies for safety objectives). It also follows from~\cite{CHJ06} that for safety objectives, games with 
secure equilibria can be solved in polynomial time.

Item \ref{thm:compl-memless-perf-gr1}: 
It follows from the results of~\cite{FHW80} that even in a graph (not a game) the question whether there exists a memoryless
strategy to visit two distinct states infinitely often is NP-hard (a reduction from directed subgraph 
homeomorphism).
Since visiting two distinct states infinitely often is a conjunction of two B\"uchi objectives, 
which is a special case of GR(1) objectives, the lower bound follows.
For the NP upper bound, the witness memoryless strategy can be guessed, and once a memoryless
strategy is fixed, we have a graph, and the polynomial-time verification procedure is the 
polynomial-time algorithm for model checking graphs with GR(1) objectives~\cite{PPS06}.

Item \ref{thm:compl-memless-perf-ltl}: 
In the special case of a game graph where every player-1 state has exactly one outgoing edge,
the memoryless AGS problem is an LTL model checking problem, and thus the lower bound of
LTL model checking~\cite{ModelCheckBook} implies PSPACE-hardness.
For the upper bound, we guess a memoryless strategy (as in Item \ref{thm:compl-memless-perf-gr1}),
and the verification problem is an LTL model checking question.
Since LTL model checking is in PSPACE~\cite{ModelCheckBook} and NPSPACE=PSPACE (by Savitch's theorem)~\cite{Savitch70,ComplexityBook},
we obtain the desired result.
\end{proof}

\subsubsection{Memoryless AGS, Partial Information.}
The following Theorem justifies the results in the second column of 
\ifextended
Table~\ref{tab:complexity}.
\else
the table in Theorem~\ref{thm:complexity}.
\fi

\begin{theorem}
\label{thm:compl-memless-part}
The complexity of memoryless AGS under partial information is 
\begin{enumerate}[i)]
\item \label{thm:compl-memless-part-safe} NP-complete for safety properties,
\item \label{thm:compl-memless-part-gr1} NP-complete for GR(1) properties, and
\item \label{thm:compl-memless-part-ltl} PSPACE-complete for LTL properties.
\end{enumerate}
\end{theorem}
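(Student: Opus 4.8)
The plan is to reuse the guess-and-check skeleton from the perfect-information case (Theorem \ref{thm:compl-memless-perf}), adapting both the guessing and the verification steps to the partial-information setting, and to obtain the matching lower bounds by reductions from known hard problems. For the upper bounds: a memoryless partial-information strategy for player $i$ is simply a function from observations to successor moves, so it can still be guessed in polynomial space (indeed polynomial size). Once \emph{all} of the relevant strategies have been fixed, however, the AGS acceptance condition (via secure equilibria, items 1--3 of the game reduction) must still be checked. The key observation is that fixing memoryless strategies for players $1$ and $2$ collapses the three-player game into a graph (only player $3$, the scheduler, remains, and its objective $\fair$ together with the process objectives $\spec_1,\spec_2$ are $\omega$-regular), and the verification of each of the three AGS conditions is then an $\omega$-regular model-checking question on that graph. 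So I would: (i) guess memoryless partial-information strategies $\sigma_1,\sigma_2$ for the main solution, together with the auxiliary witness strategies needed to certify the $U_i$, $F_i$, and $W$ conditions from the game reduction; (ii) check in polynomial time (for safety/GR(1)) or in polynomial space (for LTL, via the standard automata-theoretic LTL model-checking procedure on the collapsed graph) that these strategies satisfy the AGS conditions. This gives NP for safety and GR(1) (guess polynomial-size strategy, verify in P), and PSPACE for LTL (NPSPACE $=$ PSPACE by Savitch).

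For the lower bounds, item \ref{thm:compl-memless-part-ltl} (PSPACE-hardness for LTL) already follows from the perfect-information case: the restriction used there — a game graph where every player-1 state has exactly one outgoing edge, turning memoryless AGS into LTL model checking — is a degenerate instance of partial information as well, so PSPACE-hardness carries over verbatim, and PSPACE-membership is the upper bound just sketched. Item \ref{thm:compl-memless-part-gr1} (NP-hardness for GR(1)) likewise inherits from Theorem \ref{thm:compl-memless-perf}, item \ref{thm:compl-memless-perf-gr1}: perfect information is the special case $\obsvarset_i = \statevarset_{3-i}$ of partial information, so the NP-hardness of finding a memoryless strategy visiting two distinct states infinitely often (the reduction from directed subgraph homeomorphism, \cite{FHW80}) still applies, and NP-membership is again the guess-and-check argument above.

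The genuinely new content is item \ref{thm:compl-memless-part-safe}: NP-hardness for \emph{safety} objectives, where the perfect-information problem was in P. Here I would reduce from an NP-hard problem about memoryless strategies under partial information for safety — the natural candidate is the problem of deciding whether player $1$ has a memoryless observation-based strategy that keeps a one-player safety game inside a safe set, which is NP-hard (e.g.\ by encoding a SAT or $3$-colouring instance into the observation structure, so that a consistent observation-to-move function corresponds exactly to a satisfying assignment). The task is to dress such an instance up as an AGS instance: make $P_2$ and $\spec_2$ trivial (so conditions (i) and (iii) reduce to ``$P_1'$ satisfies $\spec_1$ against the scheduler'' and the $F_i$ sets are empty), and let the partial-information constraint on $P_1$'s controllable variable force the strategy to be a function of the same observations used in the hardness instance. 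The main obstacle I anticipate is precisely this encoding: one must ensure that the AGS wrapper does not introduce spurious ways to win (in particular that the secure-equilibrium conditions $F_1 = F_2 = \emptyset$ really do hold, and that the scheduler/player-3 moves cannot be exploited), so that AGS-solvability is \emph{exactly} equivalent to the existence of the desired observation-based memoryless strategy — and then the NP upper bound from the guess-and-check argument closes the gap to NP-completeness.
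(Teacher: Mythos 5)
Your plan matches the paper's proof: the upper bounds are obtained by guessing the memoryless observation-based strategies as polynomial-size witnesses and verifying the AGS conditions by model checking on the resulting graph (polynomial time for safety and GR(1), PSPACE for LTL with NPSPACE $=$ PSPACE), and the GR(1) and LTL lower bounds are inherited from the perfect-information case exactly as you argue. The only divergence is the safety lower bound, where the paper simply cites the known NP-hardness result of \cite{CKS13} instead of building the reduction itself, so the embedding you flag as the main obstacle is precisely the step the paper delegates to that citation.
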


\begin{proof}
We present the proof of the three items below.

Item~\ref{thm:compl-memless-part-safe}:
The lower bound result was established in~\cite{CKS13}. 
For the upper bound, again the witness is a memoryless strategy. Given
the fixed strategy, we have a graph problem with safety and reachability
objectives that can be solved in polynomial time (for the polynomial-time verification). 

Item~\ref{thm:compl-memless-part-gr1}: 
The lower bound follows from Theorem~\ref{thm:compl-memless-perf}, Item~\ref{thm:compl-memless-perf-gr1}; and the upper bound is similar as well.

Item~\ref{thm:compl-memless-part-ltl}: 
Similar to Theorem~\ref{thm:compl-memless-perf}, Item~\ref{thm:compl-memless-perf-ltl}.
\end{proof}

\subsubsection{General AGS, Perfect Information.}
The following Theorem justifies the results in the third column of 
\ifextended
Table~\ref{tab:complexity}.
\else
the table in Theorem~\ref{thm:complexity}.
\fi

\begin{theorem}
\label{thm:compl-general-perf}
The complexity of general AGS under perfect information is 
\begin{enumerate}[i)]
\item \label{thm:compl-general-perf-safe} polynomial for safety properties,
\item \label{thm:compl-general-perf-gr1} polynomial for GR(1) properties, and
\item \label{thm:compl-general-perf-ltl} 2EXP-complete for LTL properties.
\end{enumerate}
\end{theorem}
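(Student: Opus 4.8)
The plan is to treat the three parts of Theorem~\ref{thm:compl-general-perf} via the three-player game reduction sketched above, reusing the structure of Theorem~\ref{thm:compl-memless-perf} but now allowing the synthesized processes to carry memory. For items i) and ii) the key observation is that, under perfect information, the AGS problem reduces to solving games with secure equilibria~\cite{ChatterjeeH07}, and the sub-games that arise --- the winning regions $U_i$, the force-losing regions $F_i$, and the joint winning region $W$ from the game-solution recipe --- all have safety, reachability, B\"uchi, or GR(1) objectives. First I would invoke~\cite{CHJ06} to argue that for safety specifications the secure-equilibria computation decomposes into a fixed number of games with safety and reachability objectives; each of these is solvable in polynomial time on the explicit game graph, and since there is no bound on memory we are free to use the (polynomial-size) winning strategies these algorithms produce. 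This yields item i). For item ii), the same decomposition applies, except the objectives $\Psi_i$ are GR(1); I would appeal to the polynomial-time algorithm for GR(1) games~\cite{PPS06} to solve each of the $U_i$, $F_i$, and $W$ computations, noting that restricting the game graph to a winning region and composing a constant number of such solutions keeps everything polynomial. The point that makes ii) different from the memoryless column is precisely that GR(1) games require memory in general, so dropping the memorylessness restriction is what buys us PTIME instead of NP-completeness.

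For item iii), the LTL case, I would establish matching upper and lower bounds. The lower bound: specialize to a game graph in which player~1 and player~2 have a single outgoing edge at every state, so that the AGS problem collapses to an LTL \emph{realizability/synthesis} question against the environment player~3; the 2EXPTIME-hardness of LTL synthesis~\cite{PnueliR89} then transfers. (One should check that the implication structure $\spec_2 \impl \spec_1$ etc.\ in the AGS conditions does not trivialize this --- it does not, since an LTL formula and its negation are both LTL, so a plain synthesis instance is encodable.) For the upper bound: translate $\spec_1$, $\spec_2$, and the fairness condition $\fair$ into deterministic parity (or Rabin) automata of doubly-exponential size via the standard LTL-to-automata construction, take the product with the game graph, and observe that the resulting game --- with the secure-equilibria solution concept layered on top as in the recipe above --- is a finite game with an $\omega$-regular winning condition of doubly-exponential size, solvable in time polynomial in that size (hence 2EXPTIME overall). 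Here the fact that memory is \emph{unbounded} is essential: the winning strategies in these product games are finite-memory but of doubly-exponential size, which is exactly what ``general'' (as opposed to bounded) AGS permits.

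The main obstacle I anticipate is item iii), and specifically the upper bound: one must be careful that solving the \emph{composed} system of games from the game-solution recipe (compute $U_i$, then $F_i$ on $G\restrict{U_i}$, then $W$ on $G\restrict{S\setminus(F_1\cup F_2)}$, then the final three strategy conditions) does not blow up beyond 2EXPTIME. The cleanest way is to note that each stage is a game with an $\omega$-regular objective that is a Boolean combination of the $\Psi_i$ and $\fair$; after the single doubly-exponential determinization, all stages live on (restrictions of) the \emph{same} doubly-exponential product arena, each is solved in time polynomial in the arena, and there are only constantly many stages --- so the total is $2^{2^{O(|\varphi|)}}\cdot \mathrm{poly}(|G|)$, i.e.\ 2EXPTIME. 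I would also double-check that ``general'' AGS admitting arbitrary finite memory is not a problem for the \emph{existence} direction: since the product game has finite-memory winning strategies whenever it has winning strategies at all, allowing unbounded (but finite) additional state variables $\statevarset_i'\setminus\statevarset_i$ is exactly enough, and no genuinely infinite memory is ever needed.
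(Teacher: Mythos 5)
Your overall route is the same as the paper's: reduce to the three-player game and the secure-equilibria decomposition into the regions $U_i$, $F_i$, $W$, then solve each stage. Your item i) matches the paper (which simply notes that for safety, memoryless and general AGS coincide, so the memoryless result applies), and your 2EXPTIME upper bound for item iii) is a spelled-out version of the paper's terser argument (the paper says AGS under perfect information requires solving games with implication of LTL objectives, solvable in 2EXPTIME via~\cite{PnueliR89}; your determinize-and-product account is fine, up to the nit that solving the resulting parity/Rabin game is $n^{O(d)}$ rather than polynomial in the arena, which still stays within 2EXPTIME). The first genuine gap is item ii): the stages of the decomposition are \emph{not} GR(1) games, so appealing to the GR(1) algorithm of~\cite{PPS06} does not cover them. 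The objective defining $F_i$ is $\mathit{fair} \land \Psi_i \land \neg\Psi_{3-i}$, and the final strategy requirement for player $i$ is $(\mathit{fair}\land\Psi_{3-i})\rightarrow\Psi_i$; negating or using a GR(1) formula as an antecedent leaves the GR(1) fragment (e.g.\ $\neg\Psi_{3-i}$ is a generalized-B\"uchi condition conjoined with a disjunction of co-B\"uchi conditions). The paper instead invokes the stronger fact that games whose objective is an \emph{implication} between GR(1) objectives can be solved in polynomial time (citing~\cite{GameBook}, together with~\cite{ChatterjeeH07,CHJ06}); without a result of that kind your polynomial bound for GR(1) does not follow.

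The second gap is the 2EXPTIME-hardness argument in item iii). As stated --- restricting \emph{both} player-1 and player-2 states to a single outgoing edge --- there is nothing left to synthesize, and the AGS question degenerates to checking whether two fixed processes satisfy the implications under fair scheduling, i.e.\ an LTL model-checking problem; that is precisely the trick used in the memoryless theorem, and it only yields PSPACE-hardness. To import Pnueli--Rosner hardness you must keep the controllable choices of one process: for instance, trivialize $P_2$ (single successor everywhere, $\spec_2=\mathit{true}$) and let the refinement choices of $P_1$ against player 3 (scheduler and uncontrollable inputs) encode an LTL realizability instance with $\spec_1$ the given formula, so that all three AGS conditions collapse to realizability of $\mathit{fair}\rightarrow\spec_1$. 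With item ii) repaired by the implication-games result and the hardness reduction fixed in this way, your argument aligns with the paper's proof.
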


\begin{proof}
We present the proof of the three items below.

Item~\ref{thm:compl-general-perf-safe}: For AGS under perfect information 
and safety objectives, the memoryless and the general problem coincide 
(as mentioned in Theorem~\ref{thm:compl-memless-perf}, Item~\ref{thm:compl-memless-perf-safe}).
The result follows from Theorem~\ref{thm:compl-memless-perf}, Item~\ref{thm:compl-memless-perf-safe}.

Item~\ref{thm:compl-general-perf-gr1}: 
It follows from the results of~\cite{ChatterjeeH07,CHJ06} that solving AGS for perfect-information games 
requires solving games with implication conditions. Since games with implication of GR(1) objectives
can be solved in polynomial time~\cite{GameBook}, the desired result follows. 

Item~\ref{thm:compl-general-perf-ltl}: 
The lower bound follows from standard LTL synthesis~\cite{PnueliR89}.
For the upper bound, AGS for perfect-information games requires solving implication games,
and games with implication of LTL objectives can be solved in 2EXPTIME~\cite{PnueliR89}.
The desired result follows.
\end{proof}

\subsubsection{General AGS, Partial Information.} 
The following Theorem justifies the results in the fourth column of 
\ifextended
Table~\ref{tab:complexity}.
\else
the table in Theorem~\ref{thm:complexity}.
\fi

\begin{theorem}
\label{thm:undec-general-part}
General AGS under partial information is undecidable for safety properties.
\end{theorem}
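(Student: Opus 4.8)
**Proof proposal for Theorem~\ref{thm:undec-general-part} (General AGS under partial information is undecidable for safety properties).**

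The plan is to reduce from a known undecidable problem about distributed synthesis under partial information or, more directly, from the halting/non-halting problem for Turing machines via a distributed-control encoding. The cleanest route is to reduce from the undecidability of distributed synthesis for safety specifications in an architecture with two independent processes (as established by Pnueli--Rosner and refined by Finkbeiner--Schewe), or alternatively from the problem of deciding whether two partially-informed players can cooperatively enforce a safety condition against an adversarial scheduler/environment — this latter problem is essentially the "two-player partial-information cooperative safety game" whose undecidability is folklore via a reduction encoding communicating finite-state machines or a tiling/TM computation. Concretely, I would set $\spec_2 = \always \tttrue$ (the trivial always-true safety property), so that the AGS conditions collapse: condition (i) becomes $\llbracket P_1' \parallel P_2 \parallel \sched \rrbracket \subseteq \spec_2 \impl \spec_1$, which is vacuous since $\spec_2$ is trivially implied... wait — I want the opposite collapse, so instead I would take $\spec_1 = \spec_2 = \spec$ a single safety property, making the three AGS conditions degenerate to: (i) $P_1'$ against $P_2$ satisfies $\spec$, (ii) $P_1$ against $P_2'$ satisfies $\spec$, (iii) $P_1' \parallel P_2'$ satisfies $\spec$. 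One then designs $P_1, P_2$ so that the "background" processes $P_1, P_2$ are already safe in isolation (so (i) and (ii) are easy to satisfy by trivial refinements), forcing the real content into condition (iii): finding refinements $P_1', P_2'$ of the two partially-informed processes that jointly maintain a safety invariant.

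The key steps, in order. First, fix the target: reduce from the non-halting problem of a deterministic Turing machine $\mathcal{M}$ on empty input (undecidable). Second, build the two processes: $P_1$ and $P_2$ each control, via their controllable variables $\controlvarset_i$, the emission of a stream of symbols meant to encode successive configurations of $\mathcal{M}$; the partial-information restriction is what makes the encoding rigid — each process observes only its own local history plus a limited window, so the only way two partial-information refinements can stay globally consistent (and hence safe) is if they both deterministically output the unique valid computation of $\mathcal{M}$. This is the classic "both players must produce the same infinite object but cannot see enough to coordinate unless the object is forced" trick, exactly as in the Pnueli--Rosner undecidability proof for the architecture $A_0$. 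Third, define $\spec$ as a safety property: it flags as unsafe any step where the joint output violates a local consistency check of the TM transition relation, or where the two processes' outputs disagree on an overlapping cell. Fourth, verify the reduction: if $\mathcal{M}$ does not halt, the processes can output the (infinite) non-halting computation, all local checks pass forever, so the cooperative condition (iii) is met and, with trivial choices for the isolated behaviours, so are (i) and (ii), hence AGS succeeds; conversely if $\mathcal{M}$ halts, every attempted encoding must either deviate from $\mathcal{M}$'s transitions (caught by $\spec$) or "run out" at the halting configuration in a way that the safety automaton detects, so no memoryful refinements can win and AGS fails. Fifth, confirm that everything stays within the "general" (unbounded memory) partial-information setting — the refinements $P_i'$ are allowed arbitrary finite extra state $\statevarset_i' \setminus \statevarset_i$, which is essential since tracking the TM tape requires unbounded memory, and this is precisely why the bounded-memory column is decidable while this one is not.

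The main obstacle I expect is engineering the information flow so that the safety specification alone — with no liveness — suffices to force faithful simulation. With full information a single player could just simulate $\mathcal{M}$ internally; the undecidability genuinely needs the partial-observation handcuffs, and getting two processes plus an adversarial scheduler to act as the "two components of a distributed controller" requires carefully routing which variables are in $\obsvarset_i$ versus local, and using the scheduler (player 3 / the environment) as the adversary that probes consistency at unpredictable points — analogous to how the environment in distributed synthesis "tests" the components. A secondary subtlety is discharging AGS conditions (i) and (ii) cleanly: I need the original $P_1, P_2$ (the non-refined versions, which still have their controllable inputs as free nondeterminism) to have some safe behaviour against the opponent so those conditions do not accidentally become the hard part or become unsatisfiable; the standard fix is to give each process a trivial "safe idle" mode available whenever the other process is in its original nondeterministic form, so (i) and (ii) reduce to one-player partial-information safety synthesis which is decidable and here trivially winnable, isolating the undecidable core in (iii). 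Finally I would note that the same construction, being already a safety specification, immediately propagates undecidability up to GR(1) and full LTL, giving the rest of the fourth column.
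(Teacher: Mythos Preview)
Your overall strategy is sound and will work, but it differs from the paper's route, and you have one harmless misstep that actually simplifies your argument.

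The paper's proof is essentially a two-line citation: having already set up (earlier in the section) the correspondence between AGS under partial information and three-player partial-observation games, it simply invokes the known undecidability of such games, citing Peterson (1979) for the general result and Chatterjee et al.\ for the extension to safety objectives. No explicit reduction is spelled out in the proof of Theorem~\ref{thm:undec-general-part} itself.

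Your approach is instead a direct, self-contained reduction from Turing-machine non-halting through a Pnueli--Rosner-style distributed-synthesis encoding. This is more constructive and arguably more informative, since it exhibits the hard instances explicitly rather than delegating to the literature; the price is that you have to re-engineer the standard encoding inside the paper's specific process-and-scheduler model.

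One correction that helps you: when you set $\spec_1 = \spec_2 = \spec$, conditions~(i) and~(ii) of AGS do \emph{not} become ``$P_1'$ against $P_2$ satisfies $\spec$'' as you write. Condition~(i) reads $\llbracket P_1' \parallel P_2 \parallel \sched \rrbracket \subseteq (\spec_2 \rightarrow \spec_1) = (\spec \rightarrow \spec)$, which every trace satisfies; likewise for~(ii). Both collapse to tautologies, and the entire content of the AGS instance sits in condition~(iii), which is exactly cooperative co-synthesis for the safety property $\spec$. Your paragraph about engineering a ``safe idle mode'' to discharge~(i) and~(ii) is therefore unnecessary --- you get that collapse for free. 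With that simplification, the remainder of your plan (encode a TM computation so that two partially-informed processes can jointly maintain a safety invariant against the adversarial scheduler iff the machine runs forever) is the standard construction and goes through as you describe, and your final remark about propagation to GR(1) and LTL is correct.
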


\begin{proof}
It was shown in~\cite{Peterson79} that three-player partial-observation games are undecidable, and it was also shown that the undecidability result holds for safety objectives as well~\cite{CHOP13}.
\end{proof}


\else
\fi

\section{Algorithms for \ags}

Given the undecidability of \ags in general, and its high complexity for most 
other cases, we propose a pragmatic approach that divides the general 
synthesis problem into a sequence of synthesis problems with a bounded
amount of memory, and encodes the resulting problems into SMT formulas. Our 
encoding is inspired by the \emph{Bounded Synthesis} 
approach~\cite{FinkbeinerS13}, but supports synthesis from non-deterministic program sketches, 
as well as \ags problems. By iteratively deciding
whether there exists an implementation for an increasing bound on the number of 
memory variables, we obtain a semi-decision procedure for \ags with partial 
information.

We first define the procedure for cooperative co-synthesis problems, and then 
show how to extend it to \ags problems.

\subsection{SMT-based Co-Synthesis from Program Sketches}
\label{sec:SMT-encoding}

Consider a \emph{cooperative} co-synthesis problem with inputs $P_1$ and $P_2$, 
defines as 
$P_i=(\statevarset_i,\obsvarset_i,\inputvarset_i,
\trans_i)$, two sets $\controlvarset_1,\controlvarset_2$ of controllable 
variables with $\controlvarset_i \subseteq \inputvarset_i$, a
specification $\spec_1 \land \spec_2$, and 
a start valuation $\val_0 \in \valsetof{\statevarset \cup \inputvarset}$, 
where $\inputvarset = \inputvarset_1 \cup \inputvarset_2$.

In the following, we describe a set of SMT constraints such that a model  
represents refinements $P_1'\preceq P_1,P_2'\preceq P_2$ such 
that for all fair schedulers $\sched$, we have
$\llbracket P_1' \parallel P_2' \parallel \sched, \val_0\rrbracket \subseteq 
\spec_1 \land \spec_2$. Assume we are given a bound $b \in \bbN$, and let $
\memoryvarset_1, 
\memoryvarset_2$ be disjoint sets of additional memory variables with 
$\card{\memoryvarset_i}=b$ for $i \in \{1,2\}$.

\medskip
\noindent
\textbf{Constraints on given transition functions.}
In the expected way, the transition functions $\trans_1$ and $\trans_2$ are 
declared as functions $\trans_i: \valsetof{\statevarset_i} \times 
\valsetof{\obsvarset_i} \times \valsetof{\inputvarset_i} \rightarrow 
\valsetof{\statevarset_i}$, and directly encoded into SMT constraints by 
stating 
$ \trans_i(\state,\obsval,\inputval)=\state' $
for every $\state \in \valsetof{\statevarset_i},\obsval \in 
\valsetof{\obsvarset_i},\inputval \in \valsetof{\inputvarset_i}$, according 
to the given transition functions $\trans_1, \trans_2$.

\medskip
\noindent
\textbf{Constraints for interleaving semantics, fair scheduling.}
To obtain an encoding for interleaving semantics, we 
add a scheduling variable $\schedvar$ to both sets of inputs $\inputvarset_1$ 
and $\inputvarset_2$, and require that (i)~$\trans_1(\state,\obsval,\inputval) = \state$
whenever $\inputval(\schedvar) = \false$, and (ii)
$\trans_2(\state,\obsval,\inputval) = \state$
whenever $\inputval(\schedvar) = \true$.
Fairness of the scheduler can then be encoded as the LTL formula
$\always \eventually \schedvar \land \always \eventually \neg \schedvar$, 
abbreviated $\fair$ in the following.

\medskip
\noindent
\textbf{Constraints on resulting strategy.}
Let $\statevarset_i' = \statevarset_i \cup \memoryvarset_i$ be the extended 
state set, and $\inputvarset_i' = \inputvarset_i \setminus \controlvarset_i$ 
the reduced set of input variables of process $P_i'$. Then the resulting 
strategy of $P_i'$ is represented by functions $\memtrans_i: \valsetof{
\statevarset_i'} \times \valsetof{\obsvarset_i} \times \valsetof{
\inputvarset_i'} \rightarrow \valsetof{\memoryvarset_i}$ to update the
memory variables, and $f_i : \valsetof{\statevarset_i'} \times \valsetof{
\obsvarset_i} \times \valsetof{\inputvarset_i'} \rightarrow \valsetof{
\controlvarset_i}$ to resolve the non-determinism for controllable variables.
Functions $f_i$ and $\memtrans_i$ for $i \in \{1,2\}$ are constrained 
indirectly using constraints on an auxiliary annotation function 
that will ensure that the resulting strategy satisfies the specification $
\spec = (\fair \rightarrow \spec_1 \land \spec_2)$.
To obtain these constraints, first transform $\spec$ into a universal co-B\"
uchi automaton $\automaton_\spec = (Q,q_0,\Delta,F)$, where
\begin{itemize}
\item $Q$ is a set of states and $q_0 \in Q$ is the initial state,
\item $\Delta \subseteq Q \times Q$ is a set of transitions, labeled with 
valuations $\val \in \valsetof{\statevarset_1 \cup \statevarset_2 \cup 
\inputvarset_1 \cup \inputvarset_2}$, and
\item $F \subseteq Q$ is a set of rejecting states.
\end{itemize}
The automaton is such that it rejects a trace if it violates $\spec$, i.e., 
if rejecting states are visited infinitely often. 
Accordingly, it accepts a concurrent program $(P_1 \parallel P_2 \parallel 
\sched, \val_0)$ if no trace in $\llbracket P_1 \parallel P_2 \parallel \sched
, \val_0\rrbracket$ violates $\spec$. See~\cite{FinkbeinerS13} for more
background.

Let $\statevarset' = \statevarset_1' \cup \statevarset_2'$. We constrain 
functions $f_i$ and $\memtrans_i$ with respect to an 
additional annotation function $\annot: Q \times \valsetof{\statevarset'} 
\rightarrow \bbN \cup \{\bot\}$. In the following, let $\trans_i'(\state \ccat
\memoryval,\obsval,\inputval)$ denote the combined update function for the 
original state variables and additional memory variables, explicitly written 
as $$\trans_i(\state \ccat \memoryval,\obsval,
\inputval \ccat f_i(\state,\memoryval,\obsval,\inputval)) \ccat 
\memtrans_i(\state \ccat \memoryval,\obsval,\inputval).$$
Similar to the original bounded synthesis 
encoding~\cite{FinkbeinerS13}, we require that
\[ \annot(q_0,\val_0\restrict{\statevarset'}) \in \bbN. \]
If (1) $(q,(\state_1,\state_2))$ is a composed state with $\annot(q,(\state_1,
\state_2)) \in \bbN$, (2) $\inputval_1 \in \valsetof{\inputvarset_1},
\inputval_2 \in \valsetof{\inputvarset_1}$ are inputs and $q'\in Q$ is a 
state of the automaton such that there is a transition $(q,q') \in \Delta$ 
that is labeled with $(\inputval_1,\inputval_2)$, and (3) $q'$ is a non-
rejecting state of $\automaton_\spec$, then we require
\[ \annot(q',(\trans_1'(\state_1,\obsval_1,\inputval_1),\trans_2'(\state_2,
\obsval_2,\inputval_2))) \geq \annot(q,(\state_1,\state_2)), \]
where values of $\obsval_1,\obsval_2$ are determined by values of $\state_2$ 
and $\state_1$, respectively (and the subset of states of one process which 
is observable by the other process).

\noindent
Finally, if conditions (1) and (2) above hold, and $q'$ is rejecting 
in $\automaton_\spec$, we require
\[ \annot(q',(\trans_1'(\state_1,\obsval_1,\inputval_1),\trans_2'(\state_2,
\obsval_2,\inputval_2))) > \annot(q,(\state_1,\state_2)). \]
Intuitively, these constraints ensure that in no execution starting from 
$(q_0,\val_0)$, the automaton will visit rejecting states infinitely often.
Finkbeiner and Schewe~\cite{FinkbeinerS13} have shown that these constraints 
are satisfiable if 
and only if there exist implementations of $P_1,P_2$ with state variables $
\statevarset_1,\statevarset_2$ that 
satisfy $\spec$. With our additional constraints on the original $\trans_1, 
\trans_2$ and the integration of the $f_i$ and $\memtrans_i$ as new 
uninterpreted functions, they are satisfiable if there exist $\bound$-bounded 
refinements 
of $P_1, P_2$ (based on $\controlvarset_1, \controlvarset_2$) that satisfy $
\spec$. An SMT solver can then be used to find interpretations of the $f_i$ 
and $\memtrans_i$, as well as the auxiliary annotation functions that witness 
correctness of the refinement.

\medskip
\noindent
\textbf{Correctness.} 
The proposed algorithm for bounded synthesis 
from program sketches is correct and will eventually find a solution if it 
exists:

\begin{restatable}{proposition}{PropOne}
Any model of the SMT constraints will represent a refinement of the program 
sketches such that their composition satisfies the specification.
\end{restatable}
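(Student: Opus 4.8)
The plan is to read the refinements off the interpretations of the memory-update and choice functions in a model, and then to replay the soundness direction of the bounded-synthesis construction~\cite{FinkbeinerS13} in our composed, interleaved, partial-information setting; the extra work over~\cite{FinkbeinerS13} is only the bookkeeping needed to relate the scheduler-based operational semantics of Section~\ref{sec:prelim} to the automaton product that the annotation constrains.

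First, from a model I would fix the refinements: take the interpretations of $f_i$ and $\memtrans_i$ and set $P_i' = (\statevarset_i',\obsvarset_i,\inputvarset_i',\trans_i')$ with $\statevarset_i' = \statevarset_i \cup \memoryvarset_i$, $\inputvarset_i' = \inputvarset_i \setminus \controlvarset_i$, and $\trans_i'$ the combined update function of Section~\ref{sec:SMT-encoding}. Since $\memoryvarset_1,\memoryvarset_2$ are disjoint from each other and from all original variables, $P_i'$ has the right shape: $\statevarset_i' \supseteq \statevarset_i$ and $\statevarset_1' \cap \statevarset_2' = \statevarset_1 \cap \statevarset_2$. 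The refinement condition of Section~\ref{sec:prelim} --- that for all $\state \ccat \memoryval$, $\obsval$, $\inputval$ there is some $\controlval$ with $\trans_i'(\state \ccat \memoryval,\obsval,\inputval)\restrict{\statevarset_i} = \trans_i(\state,\obsval,\inputval \ccat \controlval)$ --- is immediate with witness $\controlval = f_i(\state,\memoryval,\obsval,\inputval)$, because restricting the combined update to $\statevarset_i$ discards exactly the $\memtrans_i$-component. Hence each $P_i'$ is a $b$-bounded refinement of $P_i$, and in particular $P_1' \preceq P_1$ and $P_2' \preceq P_2$.

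Second, I would prove the correctness property: $\llbracket P_1' \parallel P_2' \parallel \sched,\val_0'\rrbracket\restrict{\statevarset\cup\inputvarset} \subseteq \spec_1 \land \spec_2$ for every fair $\sched$, where $\val_0'$ is the extension of $\val_0$ by the initial memory valuation used in the encoding. Because $\schedvar$ is an input of both processes and constraints (i),(ii) force the non-scheduled process to stutter, the executions of $P_1'\parallel P_2'\parallel\sched$ for a \emph{fair} $\sched$, projected to $\statevarset\cup\inputvarset$, are exactly the traces consistent with the combined transition relation of $\trans_1',\trans_2'$ (which is deterministic once the inputs are fixed) that satisfy $\fair = \always\eventually\schedvar\land\always\eventually\neg\schedvar$. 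So it suffices to show that every trace consistent with that transition relation lies in $\spec = (\fair \rightarrow \spec_1\land\spec_2)$; equivalently, that in the product of $\automaton_\spec$ with the combined transition relation, started from $(q_0,\val_0'\restrict{\statevarset'})$, no path visits a rejecting state infinitely often. But along any such path the annotation constraints make the sequence of $\annot$-values a well-defined sequence of naturals that is non-decreasing and strictly increases at every visit to a rejecting state --- impossible for an infinite path. Since $\automaton_\spec$ is a universal co-B\"uchi automaton for $\spec$, it then accepts every execution, so every execution is in $\spec$; restricting to $\statevarset\cup\inputvarset$ and to fair schedulers yields the claim, which together with the first part proves the proposition.

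The quantitative step (no non-decreasing sequence of naturals increases infinitely often) and the equivalence between acceptance by $\automaton_\spec$ and membership in $\spec$ are standard and can be cited from~\cite{FinkbeinerS13}. The step I expect to be the main obstacle is the correspondence invoked in the previous paragraph: one must check that every scheduler decision is realizable by a value of $\schedvar$ and every fair $\schedvar$-sequence comes from a fair scheduler; that, because $\trans_i$, $f_i$ and $\memtrans_i$ are interpreted as total functions, a product path is determined by an input sequence together with a choice of run of $\automaton_\spec$, so that the universal quantification over inputs in the annotation constraints genuinely covers all executions and all runs; and that the initial memory valuation forming $\val_0'$ is the one at which $\annot$ is required to be a natural number, so that $(q_0,\val_0'\restrict{\statevarset'})$ really is the initial product state. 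The partial-information features ($\obsvarset_i$ and $\controlvarset_i$ possibly proper) enter only through the restricted domains of $f_i$ and $\memtrans_i$, and are handled automatically, since the model assigns these functions precisely on those domains.
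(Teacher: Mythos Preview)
Your proposal is correct and follows the same two-part structure as the paper: (1) a model yields refinements by construction, with the witness $\controlval = f_i(\state,\memoryval,\obsval,\inputval)$ exactly as you give it, and (2) correctness of the composition follows from the soundness of the bounded-synthesis annotation scheme~\cite{FinkbeinerS13}. The paper's own proof is a two-sentence sketch that defers both parts to the definitions and to~\cite{FinkbeinerS13}; your write-up simply unpacks the ranking argument and the scheduler/$\schedvar$ correspondence that the paper leaves implicit, so there is no genuine methodological difference.
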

\ifextended
\begin{proof}
From our definitions of refinement and of the transition functions $\trans_i'$,
it is obvious that a model will represent a refinement of the given program 
sketches.

Furthermore, by correctness of the annotation approach from 
bounded synthesis~\cite{FinkbeinerS13}, any transition function that 
satisfies the constraints will satisfy the specification (and the combination 
of $\trans_i'$ is in particular a transition function).
\end{proof}
\else
\fi
\begin{restatable}{proposition}{PropTwo}
There exists a model of the SMT constraints if there exist $\bound$-bounded 
refinements $P_1' \preceq P_1, P_2' \preceq P_2$ that satisfy the specification.
\end{restatable}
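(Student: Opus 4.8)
The plan is to prove the converse of Proposition~1: given $\bound$-bounded refinements $P_1' \preceq P_1, P_2' \preceq P_2$ whose parallel composition (under every fair scheduler, from $\val_0$) satisfies $\spec_1 \land \spec_2$, I construct an explicit model of the SMT constraints. First I would extract from the refinement $P_i'$ the witnessing functions: since $P_i'$ refines $P_i$ with respect to $\controlvarset_i$, its transition function $\trans_i'$ has the form required by our encoding, namely it can be written as $\trans_i(\state\restrict{\statevarset_i}, \obsval, \inputval \ccat \controlval) \ccat (\text{memory update})$ for a suitable choice of $\controlval \in \valsetof{\controlvarset_i}$ for each argument tuple. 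I read off $f_i : \valsetof{\statevarset_i'} \times \valsetof{\obsvarset_i} \times \valsetof{\inputvarset_i'} \to \valsetof{\controlvarset_i}$ from that choice of $\controlval$, and $\memtrans_i$ from the update to the memory variables $\memoryvarset_i = \statevarset_i' \setminus \statevarset_i$; these interpretations immediately satisfy the ``constraints on given transition functions'' and make the composed $\trans_i'$ in the encoding coincide with the actual transition function of $P_i'$.

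Next I would handle the annotation function $\annot : Q \times \valsetof{\statevarset'} \to \bbN \cup \{\bot\}$. The key observation is that the product of the deterministic system $P_1' \parallel P_2'$ (over the finite state space $\valsetof{\statevarset'}$, with scheduler choices and inputs ranging freely) with the universal co-B\"uchi automaton $\automaton_\spec$ for $\spec = (\fair \rightarrow \spec_1 \land \spec_2)$ is \emph{rejecting-free}: because the composition satisfies $\spec_1 \land \spec_2$ under all fair schedulers, no run of $\automaton_\spec$ on any execution of $P_1' \parallel P_2' \parallel \sched$ (with $\sched$ encoded via the scheduling variable $\schedvar$) visits $F$ infinitely often. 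This is exactly the situation in which the Finkbeiner--Schewe bounded-synthesis annotation exists~\cite{FinkbeinerS13}: one defines $\annot(q,\state)$ to be the maximal number of rejecting states visited on any finite path in the product reaching $(q,\state)$ from $(q_0, \val_0\restrict{\statevarset'})$ (and $\bot$ for unreachable pairs). Since the product is finite and rejecting-free, this maximum is a well-defined natural number on reachable pairs, and by construction it satisfies $\annot(q_0,\val_0\restrict{\statevarset'}) \in \bbN$, the non-strict inequality along transitions into non-rejecting automaton states, and the strict inequality along transitions into rejecting states. Assigning these values yields a model of the remaining SMT constraints.

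The main obstacle is the step of establishing that the product graph is rejecting-free, which requires carefully matching three different ``execution'' notions: the semantics $\llbracket P_1' \parallel P_2' \parallel \sched, \val_0'\rrbracket$ from the definitions section (with an explicit scheduler function and a valuation $\val_0'$ extending $\val_0$ to the memory variables), the encoded interleaving semantics that uses the fresh input variable $\schedvar$ together with the fairness formula $\fair$, and the runs of $\automaton_\spec$ over traces restricted to $\statevarset \cup \inputvarset$. I would argue that (a) every trace arising in the encoded semantics from some infinite sequence of $\schedvar$-values corresponds, restricted appropriately, to an execution under some scheduler $\sched$, and it is fair exactly when the $\schedvar$-sequence satisfies $\fair$; (b) hence every such trace is in $\fair \rightarrow \spec_1 \land \spec_2 = \spec$; and (c) therefore $\automaton_\spec$, being a universal co-B\"uchi automaton for $\spec$, has no rejecting-infinitely-often run on it, so the reachable part of the product contains no reachable cycle through $F$, giving the bound needed for $\annot$. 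Once this correspondence is pinned down, the construction of $\annot$ and the verification that all constraints hold are routine, and invoking the soundness direction of~\cite{FinkbeinerS13} (adapted to our $\bound$-bounded refinement setting, where the extra memory variables $\memoryvarset_i$ play the role of the synthesized state) completes the argument.
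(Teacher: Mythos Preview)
Your proposal is correct and follows the same approach as the paper: extract $f_i$ and $\memtrans_i$ from the refinement witnesses guaranteed by the definition of $P_i' \preceq P_i$, and then appeal to the Finkbeiner--Schewe annotation construction for the $\annot$ function. In fact your argument is considerably more complete than the paper's own proof, which only spells out the extraction of $f_i$ and $\memtrans_i$ and leaves the existence of a satisfying $\annot$ (and the matching of the scheduler semantics with the $\schedvar$-encoding) entirely implicit via the reference to~\cite{FinkbeinerS13}.
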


\ifextended
\begin{proof}
Suppose such $P_1', P_2'$ exist. By the definition of refinement, we have 
that for all $\state \in \valsetof{\statevarset_i'}
, \obsval \in \valsetof{\obsvarset_i}, \inputval \in 
\valsetof{\inputvarset'_i}$ there exists $\controlval \in 
\valsetof{\controlvarset_i}$ with
\[ \trans_i'(\state,\obsval,\inputval)\restrict{\statevarset_i} = 
\trans_i(\state\restrict{\statevarset_i}, \obsval, \inputval \ccat \controlval
). \]

The control valuations $\controlval$ for different valuations $\state,\obsval,
\inputval$ of the other variables give us a 
model of the function $f_i$ that computes the controllable variables. In a 
similar way, the computation of memory valuations for different valuations of 
the other variables gives us a model of the function $\memtrans_i$.
\end{proof}
\else
\noindent
Proof ideas for correctness of the algorithm can be found in Appendix~\ref{correctness-appendix}.
\fi

\medskip
\noindent
\textbf{Optimization of solutions.} 
Let $\cost: \mathcal{P} \times \mathcal{P} \rightarrow \bbN$ be a user-defined
const function. We can synthesize an implementation $P_1',P_2'\in \mathcal{P}$ 
with \emph{maximal cost} $b$ by adding the constraint $\cost(P_1',P_2') \leq b$
(and a definition of the $\cost$ function), and we can \emph{optimize} the 
solution by searching for implementations with incrementally smaller cost. For
instance, a cost function could count the number of memory updates in order to 
optimize solutions for simplicity.

\subsection{SMT-based \ags}

Based on the encoding from Section~\ref{sec:SMT-encoding}, this section 
presents an extension that solves the \ags problem. Recall that the inputs to 
\ags are two program sketches 
$P_1,P_2$ with 
$P_i=(\statevarset_i,\obsvarset_i,\inputvarset_i,
\trans_i)$, two sets $\controlvarset_1,\controlvarset_2$ of controllable 
variables with $\controlvarset_i \subseteq \inputvarset_i$, two specifications
$\spec_1, \spec_2$, and 
a start valuation $\val_0 \in \valsetof{\statevarset \cup \inputvarset}$, 
where $\inputvarset = \inputvarset_1 \cup \inputvarset_2$. The goal is to 
obtain refinements $P_1' \preceq P_1$ and $P_2' \preceq P_2$ such that:
\begin{enumerate}[(i)]
\item $\llbracket P_1' \parallel P_2 \parallel \sched, \val_0 \rrbracket 
\subseteq (\fair \land \spec_2 \rightarrow \spec_1)$
\item $\llbracket P_1 \parallel P_2' \parallel \sched, \val_0 \rrbracket 
\subseteq (\fair \land \spec_1 \rightarrow \spec_2)$
\item $\llbracket P_1' \parallel P_2' \parallel \sched, \val_0 \rrbracket 
\subseteq (\fair \rightarrow \spec_1 \land \spec_2)$.
\end{enumerate}
Using the approach presented above, we can encode each of the three items 
into a 
separate set of SMT constraints, using the same function symbols and variable 
identifiers in all three problems. In more detail, this means that we
\begin{enumerate}
\item encode (i), where we ask for a model of $f_1$ and 
$\memtrans_1$ such that $P_1'$ with $\trans_1'$ and $P_2$ with the given 
$\trans_2$ satisfy the first property,
\item encode (ii), where we ask for a model of $f_2$ and 
$\memtrans_2$ such that $P_1$ with the given $\trans_1$ and $P_2'$ with $
\trans_2'$ satisfy the second property, and 
\item encode (iii), where we ask for models of $f_i$ and $\memtrans_i
$ for $i \in \{1,2\}$ such that $P_1'$ and $P_2'$ with $\trans_1'$ and 
$\trans_2'$ satisfy the third property.
\end{enumerate}
Then, a solution for the conjunction of all of these constraints must be such 
that the resulting refinements of $P_1$ and $P_2$ satisfy all three 
properties simultaneously, and are thus a solution to the \ags problem. 
Moreover, a solution to the SMT problem exists if and only if there exists 
a solution to the \ags problem.

\subsection{Extensions}
While not covered by the definition of \ags in Section~\ref{sec:prelim}, we can easily extend our algorithm to the following cases:
\begin{enumerate}
\item If we allow the sets $\memoryvarset_1, \memoryvarset_2$ to be non-disjoint, then the synthesis algorithm can refine processes also by \emph{adding shared variables}.
\item Also, our algorithms can easily be adapted to \ags with \emph{more than 2 processes}, as defined in~\cite{CR14}.
\end{enumerate}

\ifextended
\section{Implementation}
\label{sec:impl}

We have implemented our AGS approach with partial information as an extension 
to BoSY, the bounded synthesis backend of parameterized synthesis tool \textsc{Party}~\cite{KhalimovJB13a}. It uses 
LTL3BA~\cite{BabiakKRS12} to transform specifications into automata, and 
Z3~\cite{MouraB08} as SMT solver.  Our extension is available 
for download\footnote{\url{
http://www.student.tugraz.at/robert.koenighofer/tacas15_AG.zip}}.

\textbf{Input.}
Our tool takes three input files: a program sketch and one specification file 
for each process. The sketch is defined directly in SMT-LIBv2~\cite{Barret10} 
format, the specifications are given in LTL, using the Acacia~\cite{BohyBFJR12} 
syntax.

The sketch defines data types for the state space $\valsetof{\statevarset}$, the 
uncontrollable input space $\valsetof{\inputvarset'}$, the controllable input 
space $\valsetof{\controlvarset}$, and (optional) memory $\valsetof{ 
\memoryvarset}$, along with the initial valuation $v_0$ of all variables.%
\footnote{Wlog., we assume that memory variables are initialized to a default 
value, e.g. $\false$ for Boolean variables.} For each Boolean signal $s$ that 
appears in the specification, the sketch defines a labeling function $s: 
\valsetof{\statevarset} \times \valsetof{\inputvarset}  \rightarrow \mathbb{B}$, 
which is by default just the value of a state or input variable. For signals of 
the specification that are not directly available as state- or input variables, 
the labeling function needs to be explicitly defined.

In our experiments, we mostly use bitvectors of appropriate length to define 
state space, inputs, and memory.  However, our tool also supports the definition 
of user-defined data types such as tuples of enumeration types, which may be 
more convenient for other applications.  Our tool uses a special integer 
constant $M$ to refer to the number of memory variables per process, and 
increases $M$ until a solution is found. All memory variables are global by 
default. Partial information is modeled by restricting the set of variables on 
which the functions that control the strategy or update the memory can depend. 
This fine-grained definition of partial information increases the flexibility of 
our tool.

By default, the sketch defines the (global) transition function $\trans$ as the
parallel composition of the transition functions $\tau_1$ and $\tau_2$ of the
processes, but sometimes defining the combined transition function directly 
is easier. 
Finally, the sketch declares the functions that should be synthesized:
two control functions $f_i$, and two memory update functions $\memtrans_i$.
The user can specify each of these functions compositionally, with multiple 
sub-functions that control disjoint subsets of $\controlvarset_i$ or $
\memoryvarset_i$, respectively. For each sub-function, 
observable variables can be defined individually, allowing for a very 
fine-grained use of partial information.

\textbf{Optimization of solutions.}
In order to facilitate the optimization of solutions, the user can assert that
some arbitrarily computed cost has to be lower than some
special constant $\mathsf{Opt}$ in the sketch file.  Our tool will find the
minimal value of $\mathsf{Opt}$, within a user-defined interval, such
that the problem is still realizable.  At the moment, this search is implemented
in a straightforward way: $\mathsf{Opt}$ is decreased (increased) by 1 as long
the problem is realizable. More sophisticated search strategies like binary
search, 
learning from failed
attempts, 
or using incremental solving are
possible but not yet implemented.  With a solver for MAX-SMT problems, this
search could also be entrusted to the solver.

\textbf{Other applications.}
Our tool can also complete sketches with cooperative co-syn\-thesis (see
Section~\ref{sec:cosydef}).  Furthermore, we can use our tool as a model 
checker for solutions by completely defining the functions $f_i$ and 
$\memtrans_i$ in the sketch instead of just declaring them.

The fact that our tool takes as input an SMT-LIBv2 formulation of the 
synthesis problem makes it very flexible:  By defining the transition 
relation appropriately, it can also be used for synthesis of entire systems 
from scratch, or synthesis of atomic sections in concurrent programs. 
Furthermore, additional requirements on the solution can be defined easily 
with additional SMT constraints.  The obvious downside is limited usability, 
since defining the program semantics in SMT-LIBv2 format is not always easy.  
In future work, we want to implement a front-end to define simple sketching 
problems in a subset of C in order to increase the usability.

\else
\fi

\section{Experiments}
\label{sec:exp}

\ifextended
\else
We implemented\footnote{Available at {\scriptsize\url{ 
http://www.student.tugraz.at/robert.koenighofer/tacas15_AG.zip}}.} our 
approach as an extension to \textsf{BoSY}, the bounded synthesis backend of
the parameterized synthesis tool \textsf{PARTY}~\cite{KhalimovJB13a}. The 
user defines 
the sketch in \textsf{SMT-LIB} format with a special naming scheme.  The 
specification is given in LTL.  The amount of memory is defined using a 
special 
integer constant $M$, which is increased until a solution is found. To 
optimize 
solutions, the user can assert that some arbitrarily computed cost must be 
lower 
than some constant $\mathsf{Opt}$.  Our tool will find the minimal value of 
$\mathsf{Opt}$ such that the problem is still 
realizable.  Our tool can also run cooperative co-synthesis as well as verify 
existing solutions. Appendix~\ref{sec:impl} contains more details.
\fi

All experiments in this section were performed on an ordinary 
notebook 
(Intel i5-3320M CPU@2.6 GHz, 8 GB RAM, 64-bit Linux), using one CPU core 
and a memory limit of 1 GB, which was never reached. 
\ifextended

\subsection{Peterson's Mutual Exclusion Protocol}
\label{sec:exp_pet}

\noindent
This example has already been used as motivation in Section~\ref{sec:ex}.
In this section, we give additional insights, experiments and performance
measures.

In our model of this program, we use a bitvector of size 7 to represent states:
Each process has a program counter of 3 bits (assignments written in the same
line in Listing~\ref{lst:p1} are executed simultaneously), and one bit is used
to model the variable \texttt{turn}.  The current value of all other variables
can be computed from the respective program counter value. Hence, they are
modeled with state labels.

\textbf{\ags without partial information.}
When using \ags without any restrictions on variable dependencies, our tool 
takes 26 seconds to find a solution. However, it is too complicated to be shown 
here, let alone understand it.  The question marks are implemented as what 
appears to be arbitrary functions over all variables, including program counter 
bits from the other process. This solution is overly complicated and thus 
clearly undesirable. This motivates our partial information extension to \ags.

\textbf{Cooperative co-synthesis.}
In our next experiment, we therefore restrict the observable information for 
resolving the question marks by setting
\mrk{\ctrl$_{1,1}$}$=f_{1,1}(\texttt{turn}, \texttt{flag2})$,
\mrk{\ctrl$_{2,1}$}$=f_{2,1}(\texttt{turn}, \texttt{flag1})$, and 
\mrk{\ctrl$_{i,2}$}$=f_{i,2}()$.  
Furthermore, we disable \ags (i.e., use cooperative co-synthesis) and do not 
allow extra memory. Our tool takes 7 seconds to find the solution shown in 
Listing~\ref{lst:p1solm}.
\begin{figure}[tb]
\captionof{lstlisting}{Synthesis result for the sketch in
Listing~\ref{lst:p1} without \ags.}
\label{lst:p1solm}
\vspace{-0.4cm}
\begin{minipage}{0.49\textwidth}
\begin{lstlisting}[firstnumber=0]
                    turn:=F; flag1:=F; flag2:=F;
cr1:=F; wait1:=F;
do { // Process P1:
  flag1:=T;
  turn:=T;
  while(/*E\mrk{turn \& flag2}E*/) {} //wait
  cr1:=T;
  cr1:=F; flag1:=F; wait1:=T;
  while(/*E\mrk{F}E*/) {} //local work /*E\label{lst:p1solm:c}E*/
  wait1:=F;
} while(T)
\end{lstlisting}
\end{minipage}
\hspace{0.1cm}
\begin{minipage}{0.49\textwidth}
\begin{lstlisting}[firstnumber=20]

cr2:=F; wait2:=F;
do { // Process P2:
  flag2:=T;
  turn:=F;
  while(/*E\mrk{!turn}E*/) {} //better: & flag1 /*E\label{lst:p1solm:p}E*/
  cr2:=T;
  cr2:=F; flag2:=F; wait2:=T;
  while(/*E\mrk{F}E*/) {} //local work
  wait2:=F;
} while(T)
\end{lstlisting}
\end{minipage}
\end{figure}
The problem with this solution is that P2 relies on the concrete realization of
P1. If we would later modify the condition in line~\ref{lst:p1solm:c} (i.e.,
\mrk{\ctrl$_{1,2}$}) to $\true$, then P2 would starve while waiting for P1 to 
set \texttt{turn}. 

\textbf{\ags with partial information.}
AGS prevents such dependencies on the concrete realization of other processes, 
thereby making the solution robust against a posteriori changes of single 
processes.  Indeed, when running our tool \emph{with} AGS, we get \texttt{!turn 
\& flag1} in line~\ref{lst:p1solm:p}, which resolves the problem.  The execution 
time increases from 7 to 19 seconds, which is acceptable.
%

\textbf{Introducing memory.}
So far, we assumed that the synchronization variables are already present. 
However, by introducing additional memory variables, our synthesis approach can 
also invent them. In our next experiment, we remove \texttt{turn}, allow some 
memory $m$ to be updated based on the program counter (of the currently 
scheduled process) and the old memory, and set 
\mrk{\ctrl$_{1,1}$}$=f_{1,1}(\texttt{m}, \texttt{flag2})$, and 
\mrk{\ctrl$_{2,1}$}$=f_{2,1}(\texttt{m}, \texttt{flag1})$.  We get the solution 
depicted in Listing~\ref{lst:p1solmem} within 19 seconds.
\begin{figure}[tb]
\captionof{lstlisting}{Synthesis result for the sketch in
Listing~\ref{lst:p1} with \ags and memory, but without
optimization for simplicity.}
\label{lst:p1solmem}
\vspace{-0.4cm}
\begin{minipage}{0.44\textwidth}
\begin{lstlisting}[firstnumber=0]
                   flag1:=F; flag2:=F; /*E\mrk{m:=F}E*/;
cr1:=F; wait1:=F;
do { // Process P1:
  flag1:=T; /*E\mrk{m:=F;}E*/
  while(flag2 & /*E\mrk{!m}E*/) {}
  cr1:=T;
  cr1:=F; flag1:=F; wait1:=T;
  while(F) //wait
    /*E\mrk{m:=F;}E*/
  wait1:=F; /*E\mrk{m:=F;}E*/
} while(T)
\end{lstlisting}
\end{minipage}
\hspace{0.1cm}
\begin{minipage}{0.54\textwidth}
\begin{lstlisting}[firstnumber=20]

cr2:=F; wait2:=F;
do { // Process P2:
  flag2:=T; /*E\mrk{m := !m;}E*/
  while(/*E\mrk{m}E*/) {} //wait to enter
  cr2:=T;
  cr2:=F; flag2:=F; wait2:=T; /*E\mrk{m:=F;}E*/
  while(F) //wait
    m:=F;
  wait2:=F; /*E\mrk{m:=F;}E*/
} while(T)
\end{lstlisting}
\end{minipage}
\end{figure}
The synthesis tool re-invents \texttt{turn}, but the solution is
complicated.  We had to construct a graph summarizing all runs to certify that
the specification holds.  This motivates our optimization feature, which can
be used to obtain simple solutions.

\textbf{Optimization.}
Next, we therefore add to the SMT formulation of the synthesis problem
constraints that count the updates of $m$ in an integer variable $c$, and also
add the constraint $c < \mathsf{Opt}$.  Now, we let the synthesis tool find a
minimum value for $\mathsf{Opt}$ such that the problem is still realizable.
Doing this, the tool will find an overly simplistic solution: it sets the
waiting condition \mrk{\ctrl$_{i,2}$} to $\true$, which means that the
synchronization needs to work only once.  When we consider the waiting 
condition to be an
input, we get the solution shown in Listing~\ref{lst:p1sol}, which has already
been discussed in Section~\ref{sec:ex}.

\textbf{Refinement.}
In Section~\ref{sec:ex}, we already discussed the refinement of the basic \ags 
solution with an additional variable \texttt{read}. Next, we refine the version 
with memory (Listing~\ref{lst:p1sol}) in the same way. By setting 
\mrk{\ctrl$_{2,3}$}$=f_{2,3}(\texttt{read})$, our tool finds the expected 
solution $f_{2,3}(\texttt{read}) = \neg \texttt{read}$ of toggling 
$\texttt{read}$ whenever the critical section is entered within $58$ seconds. %
Again, the modular refinement saved synthesis time.  Instead of $74+58 = 132$ 
seconds for synthesizing an \ags solution and refining it later, direct 
synthesis of the refined specification for both processes simultaneously 
requires $266$ seconds.

\subsection{Peer-To-Peer Filesharing}

\textbf{Sketch.} This example has been taken from~\cite{Fisman10} with slight 
modifications.  Two
\begin{wrapfigure}[6]{r}{0.52\textwidth}
\captionof{lstlisting}{Sketch of a filesharing protocol.}
\label{lst:peer0}
\vspace{-0.4cm}
\begin{minipage}{0.24\textwidth}
\begin{lstlisting}[firstnumber=0]
   u1:=T; d1:=F; u2:=F; d2:=F;
do{//process P1
 u1 := /*E\mrk{\ctrl$_{1,1}$}E*/
 d1 := u2 & /*E\mrk{\ctrl$_{1,2}$}E*/
} while(T)
\end{lstlisting}
\end{minipage}
\hspace{0.1cm}
\begin{minipage}{0.24\textwidth}
\begin{lstlisting}[firstnumber=20]

do{//process P2
 u2 := /*E\mrk{\ctrl$_{2,1}$}E*/
 d2 := u1 & /*E\mrk{\ctrl$_{2,2}$}E*/
} while(T)
\end{lstlisting}
\end{minipage}
\end{wrapfigure}
processes P1 and P2 use a peer-to-peer protocol to share files.  In each step,
process P$i$ can decide whether it wants to upload (by setting the variable
\texttt{u}$i$) or download (by setting \texttt{d}$i$).  A process can only 
download if the other
one uploads.  This is formalized by the sketch in Listing~\ref{lst:peer0}.

\textbf{Specification.}
Process $P1$ is specified by $(\always \eventually \texttt{d1}) \wedge (\always 
\eventually (\texttt{u1} \wedge \text{scheduled}(P1)))$ and $P2$ is specified by 
$(\always \eventually \texttt{d2}) \wedge (\always \eventually (\texttt{u2} 
\wedge \text{scheduled}(P2)))$.  The first conjunct of each specification 
expresses the goal of downloading infinitely often.  The second gives the other 
process the chance to do the same.

\textbf{Results.}
We set \mrk{\ctrl$_{1,j}$}$=f_{1,j}(\texttt{d2},\texttt{u2})$ and
\mrk{\ctrl$_{2,j}$}$=f_{2,j}(\texttt{d1},\texttt{u1})$, i.e.,  $P1$ makes its 
\begin{wrapfigure}[6]{r}{0.52\textwidth}
\vspace{-0.7cm}
\captionof{lstlisting}{Solution without AGS.}
\label{lst:peer1}
\vspace{-0.4cm}
\begin{minipage}{0.24\textwidth}
\begin{lstlisting}[firstnumber=0]
   u1:=T; d1:=F; u2:=F; d2:=F;
do{// process P1
 u1 := /*E\mrk{\texttt{(d2==u2)}}E*/;
 d1 := u2 & /*E\mrk{\texttt{!d2}}E*/
} while(T)
\end{lstlisting}
\end{minipage}
\hspace{0.1cm}
\begin{minipage}{0.24\textwidth}
\begin{lstlisting}[firstnumber=20]

do{// process P2
 u2 := /*E\mrk{\texttt{(u1==d1)}}E*/
 d2 := u1 & /*E\mrk{\texttt{!d1}}E*/
} while(T)
\end{lstlisting}
\end{minipage}
\end{wrapfigure}
upload/download decisions based on the status of $P2$ and vice versa.  Without
\ags, we could\footnote{%
Without \ags, our tool could have produced this solution, but it actually
produces a different one.
%
The produced solution does not satisfy the \ags requirements either, but in a
way that is more difficult to explain.
} get the solution in Listing~\ref{lst:peer1}.
Figure~\ref{fig:peer_run_graph} summarizes all executions that are possible in
this implementation. Edges are labeled with scheduling decisions, and states
are labeled by the values of \texttt{u1}, \texttt{d1}, \texttt{u2}, and 
\texttt{d2} in this order.
\begin{figure}[htb]
  \begin{center}
    \includegraphics[width=0.8\textwidth]{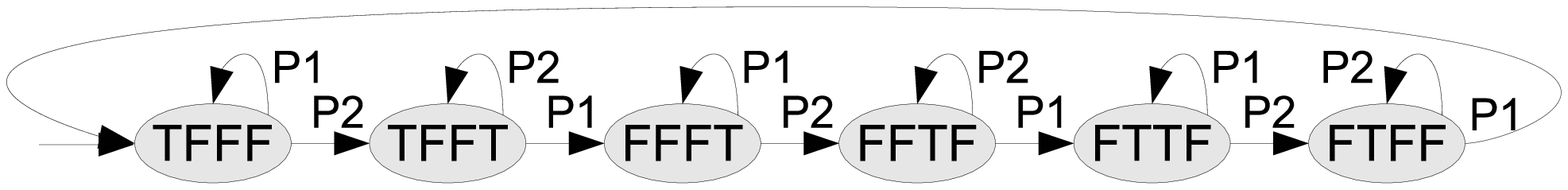}
    \caption{Run-graph summarizing all executions of Listing~\ref{lst:peer1}.}
    \label{fig:peer_run_graph}
  \end{center}
\end{figure}
Given that the
scheduler is fair, all depicted states will be visited infinitely often, so the
specification of both processes is fulfilled.  However, the correctness of this
solution depends on the fact that no process ever uploads and downloads
simultaneously.  If one process does, the other one gets stuck in a state
where it uploads but never downloads.  As a concrete example, consider an
alternative implementation of P2 with \mrk{\ctrl$_{2,j}$}$=\true$.  That is,
P2 always uploads and downloads at the same time.  The entire system will get
stuck in state TFTT, so the change in P2 makes P1 starve, although the
specification of P2 is still satisfied. Using our approach of \ags, we can be
sure that specification-preserving changes to one process cannot affect the
correctness of the other. Our tool computes an \ags solution within
one second for this example.

\else
More experiments can be found in Appendix~\ref{sec:app:exp}.
\fi

\subsection{Double-Buffering}

\smallskip
\noindent
\textbf{Sketch.} The example in Listing~\ref{lst:buf0} is taken 
from~\cite{VechevYY10} with slight adaptions.  It models a variant of the 
producer-consumer problem.  There are two buffers, \texttt{buf[0]} and 
\texttt{buf[1]}.  While process P1 writes to \texttt{buf[0]}, P2 reads from 
\texttt{buf[1]}.  Then, the buffers are swapped.  Such double-buffering is used 
in computer graphics and device drivers. We want to synthesize a rendezvous so 
that the two processes can never access the same buffer location simultaneously. 
 Hence, our (initial) specification for both processes is $\always(\neg 
\texttt{P1w} \vee \neg \texttt{P2r} \vee \texttt{fill} \neq \texttt{render} \vee 
i \neq j)$, where \texttt{P1w} indicates that P1 is in line~\ref{lst:buf0:w}, 
and \texttt{P2r} indicates that P2 is in line~\ref{lst:buf0:r}.

\begin{figure}[tb]
\captionof{lstlisting}{Sketch of a double buffering application.}
\label{lst:buf0}
\vspace{-0.4cm}
\begin{minipage}{0.44\textwidth}
\begin{lstlisting}[firstnumber=0]
                     fill:=T; render:=F;
i:=0; wait1:=F;
do { // process P1
 while(i < N) {
  buf[fill][i] := read();/*E\label{lst:buf0:w}E*/
  i := i + 1;
 }
 fill:=!fill; wait1:=T;
 while(/*E\mrk{\ctrl$_{1}$}E*/)//Sol.: /*E\mrk{\texttt{fill == render}}E*/
  { } // busy wait
 i:=0; wait1:=F;
} while(T)
\end{lstlisting}
\end{minipage}
\hspace{0.1cm}
\begin{minipage}{0.54\textwidth}
\begin{lstlisting}[firstnumber=20]

j:=0; wait2:=F;
do { // process P2
 while(j < N) {
  write(buf[render][j]) /*E\label{lst:buf0:r}E*/
  j := j + 1;
 }
 render:=!render; wait2:=T;
 while(/*E\mrk{\ctrl$_{2}$}E*/)//Sol.: /*E\mrk{\texttt{fill != render | !wait1}}E*/ /*E\label{lst:buf0:l}E*/
  { } // busy wait
 j:=0; wait2:=F;
} while(T)
\end{lstlisting}
\end{minipage}
\vspace{1cm}
\end{figure}

\smallskip
\noindent
\textbf{Results.}
Our synthesis tool satisfies this specification with \mrk{\ctrl$_{i}$}$=\true$, 
so we add
the progress properties $\always\eventually(\texttt{P1w})$ and
$\always\eventually(\texttt{P2r})$ to get more meaningful solutions.
With
\mrk{\ctrl$_{i}$}$=f_i($\texttt{fill}, \texttt{render}$)$, the tool reports
unrealizability (without memory).
The solution of waiting while \texttt{fill}=\texttt{render} does not work 
because P2 could be stuck in line~\ref{lst:buf0:l} without being scheduled until 
P1 flips $\texttt{fill}$ again, which produces a deadlock.  But intuitively, 
there should exist a solution utilizing the equality 
\texttt{fill}=\texttt{render}.  Next, we therefore set
\mrk{\ctrl$_{1}$}$=f_1($\texttt{fill}=\texttt{render}, \texttt{wait2}$)$ and
\mrk{\ctrl$_{2}$}$=f_2($\texttt{fill}=\texttt{render}, \texttt{wait1}$)$.
This allows processes to observe whether the other one is waiting.  For $N=1$, 
we get the solution printed in comments in Listing~\ref{lst:buf0}. Essentially, 
the two processes take turns: by having opposite waiting conditions 
(\texttt{fill}=\texttt{render} vs. \texttt{fill}$\neq$\texttt{render}) one waits 
while the other works.  The additional disjunct \texttt{!wait1} in P2 is only 
useful in the first iteration: if P2 finishes first, it waits although 
\texttt{fill}$\neq$\texttt{render}.

%

\smallskip
\noindent
\textbf{Performance.} 
Table~\ref{tab:buf_time} lists the synthesis times for resolving the sketch of
\begin{wraptable}[6]{r}{0.52\textwidth}
\vspace{-0.6cm}
\setlength{\tabcolsep}{2.5pt}
\caption{Synthesis times [sec] for Listing~\ref{lst:buf0}.}
\vspace{-0.3cm}
\label{tab:buf_time}
\begin{tabular}{|l|c|c|c|c|c|c|c|c|c|}
\hline
N       & 1 & 2 & 3 & 4 & 5 & 6 & 7 & 8  & 15 \\
\hline
\ags     & 1 & 5 & 5 & 54& 51& 49& 47&1097& 877\\
non-\ags & 1 & 4 & 4 & 38& 35& 32& 31& 636& 447\\
\hline
\end{tabular}
\end{wraptable}
Listing~\ref{lst:buf0} with 
increasing $N$.  We use bitvectors for encoding
the counters $i$ and $j$, and observe that the computation time mostly depends
on the bit-width. This explains the jumps whenever $N$ reaches the next
power of two.  Cooperative co-synthesis is only slightly faster than \ags on
this example.

\subsection{Synthesis of Atomic Sections in a Driver}

\textbf{Program.}
This example is taken from~\cite{CernyHRRT13} (called \texttt{ex5} there),
and is a simplified 
version of a bug in the \textsf{i2c} driver of the Linux
kernel\footnote{See
\url{
http://kernel.opensuse.org/cgit/kernel/commit/?id=7a7d6d9c5fcd4b674da38e814cfc07
24c67731b2}}.
The code is shown in Listing~\ref{lst:driver0}.
\begin{wrapfigure}[8]{r}{0.52\textwidth}
\vspace{-0.7cm}
\captionof{lstlisting}{Bug in i2c driver (simplified).}
\label{lst:driver0}
\vspace{-0.4cm}
\begin{minipage}{0.25\textwidth}
\begin{lstlisting}[firstnumber=0]
          Open:=0; On:=F;
do{//process P1
 if(Open < MAX){
  if(Open == 0)
   On := T;/*E\label{lst:p1:start}E*/
  Open++;/*E\label{lst:p1:end}E*/
 }
} while(T)
\end{lstlisting}
\end{minipage}
\hspace{0.1cm}
\begin{minipage}{0.24\textwidth}
\begin{lstlisting}[firstnumber=20]

do{//process P2
 if(Open > 0){/*E\label{lst:p2:loop}E*/
  Open--;
  if(Open == 0)/*E\label{lst:p2:start}E*/
   On := F;/*E\label{lst:p2:end}E*/
 }
} while(T)
\end{lstlisting}
\end{minipage}
\end{wrapfigure}
Process P1 opens sessions and
P2 closes them.  The variable \texttt{Open} counts the currently opened
sessions. If there are open sessions, \texttt{On} is set to $\true$,
otherwise to $\false$. Due to a race condition, it can happen
that $\texttt{Open} \neq 0$, but $\texttt{On} = \false$.\footnote{by executing
the lines 2, 3, 4, 5, 22, 23, 24, 2, 3, 4, 5, 25 in a row.}  We will
now use our engine to synthesize atomic sections so that this problem cannot
occur.

\smallskip
\noindent
\textbf{Modeling.}
We search for two functions $f_1$ and $f_2$ that map the program counter value 
of the respective process to $\true$ or $\false$.  If a program counter value is 
mapped to $\true$, then this means that the process cannot be interrupted at 
this point in the program, but immediately continues to execute the next 
instruction.  That is, the two adjacent instructions are executed atomically. 
(Each line is considered an instruction.) In the SMT input file to our synthesis 
tool, this is modeled by making process P$1$ do nothing if it is scheduled but 
$f_2(pc2)$ is $\true$, and vice versa.  This way, we do not have to change the 
scheduler to take atomic sections into account, but rather ignore ``wrong'' 
scheduling decisions in our transition relation, which has the same effect under 
a fair scheduler.

\smallskip
\noindent
\textbf{Specification.}
Using $\spec = \always((\texttt{Open} = 0) \vee \texttt{On})$ as the sole 
specification for both processes is not ideal: one process could make the other 
starve by building an atomic loop (e.g, by making all statements atomic). 
This enforces the specification, but is not desirable.
Hence, we specify process P1 by $\spec \wedge \always(\eventually( 
\text{scheduled}(P2) \wedge \neg f_1(pc1)))$ and P2 by $\spec \wedge 
\always(\eventually( \text{scheduled}(P1) \wedge \neg f_2(pc2)))$.  This way, 
both processes allow the other one to move infinitely often.  

\smallskip
\noindent
\textbf{Results.} For performance reasons, we prefer solutions with a low number 
of atomic sections.  Hence, we assign costs to active atomic sections, and 
let our tool minimize the total costs.  As a result, we get an atomic 
section between line~\ref{lst:p1:start} and~\ref{lst:p1:end}, and another one 
between line~\ref{lst:p2:start} and~\ref{lst:p2:end}. This renders all updates 
of the variable \texttt{On} atomic with the relevant accesses of 
\texttt{Open}, and thus fixes the race condition.  Both \ags and cooperative 
co-synthesis produce the same solution for this example within 54 and 35 
seconds.
%

\section{Related Work}

\textbf{Reactive synthesis.}
Automatic synthesis of reactive programs from formal 
specifications, as defined by Church~\cite{Church62}, is usually reduced 
either 
to games on finite graphs~\cite{BL69}, or to the 
emptiness problem of automata over infinite trees~\cite{Rabin72}. Pnueli and 
Rosner~\cite{PnueliR89} proposed synthesis from LTL specifications, and 
showed its 2EXPTIME complexity based on a doubly exponential
translation of the specification into a tree automaton. We use 
extensions of the game-based approach (see below) to obtain new complexity 
results for \ags, while our implementation uses an encoding based on 
tree automata~\cite{FinkbeinerS13} that avoids one exponential blowup compared 
to the standard approaches~\cite{KupfermanV05}. 

We consider the synthesis of concurrent or distributed reactive systems with 
partial information, which has been shown to be undecidable in 
general~\cite{PnueliR90}, even 
for simple safety fragments of temporal logics~\cite{Schewe14}. 
Several approaches for distributed synthesis have been proposed, either by 
restricting the specifications to be local to each 
process~\cite{MadhusudanT01}, by restricting the communication graph to 
pipelines and similar structures~\cite{FinkbeinerS05}, or by falling back to 
semi-decision procedures that will 
eventually find an implementation if one exists, but in
general cannot detect unrealizability of a specification~\cite{FinkbeinerS13}.
Our synthesis 
approach is based on the latter, and extends it with synthesis from program 
sketches~\cite{Solar-Lezama13}, as well as the assume-guarantee 
paradigm~\cite{ChatterjeeH07}.
%

\medskip
\noindent
\textbf{Graph games.} Graph games provide a mathematical foundation
to study the reactive synthesis problem~\cite{Church62,BL69,GH82}.
For the traditional perfect-information setting, the complexity of solving 
games has been deeply studied; e.g., for reachability and safety objectives 
the problem is PTIME-complete~\cite{Immerman81,Beeri};
for GR(1) the problem can be solved in polynomial time~\cite{PPS06};
and for LTL the problem is 2EXPTIME-complete~\cite{PnueliR89}.
For two player partial-information games with reachability 
objectives, EXPTIME-completeness was established in~\cite{Reif84}, and 
symbolic algorithms and strategy construction procedures were studied 
in~\cite{CDHR07,BCDDH10}.
However, in the setting of multi-player partial-observation games, the 
problem is undecidable even for three players~\cite{Peterson79} and for 
safety objectives as well~\cite{CHOP13}.
While most of the previous work considers only the general problem and its
complexity, the complexity distinction we study for memoryless strategies, 
and the practical 
SMT-based approach to solve these games has not been studied before.

\medskip
\noindent
\textbf{Various equilibria notions in games.}
In the setting of two-player games for reactive synthesis, the goals of the 
two players are complementary (i.e., games are zero-sum). 
For multi-player games there are various notions of equilibria studied for 
graph games, such as Nash equilibria~\cite{Nash50} for graph games
that inspired notions of  rational synthesis~\cite{Fisman10}; refinements of 
Nash equilibria such as secure equilibria~\cite{CHJ06} 
that inspired assume-guarantee synthesis (AGS)~\cite{ChatterjeeH07}, and 
doomsday equilibria~\cite{CDFR14}. An alternative to Nash equilibria and its 
refinements are approaches based on iterated admissibility~\cite{BrenguierRS14}.
Among the various equilibria and synthesis notions, the most relevant one for 
reactive synthesis is 
AGS, which is applicable for synthesis of mutual-exclusion
protocols~\cite{ChatterjeeH07} as well as for security protocols~\cite{CR14}.
The previous work on AGS is severely restricted by perfect information, 
whereas we consider
the problem under the more general framework of partial-information (the need
of which was already advocated in applications in~\cite{JMM11}).

\ifextended
\medskip
\noindent
\textbf{Synthesis of program fragments, sketching.}
For functional programs, where the specification is a relation between a 
single pair of inputs and outputs that can be represented as a first-order 
logic formula, early works~\cite{WaldingerL69,Green69,MannaW71} were based on
extensions of first-order theorem provers with induction and proof analysis. 
Recent methods leverage the power 
of decision procedures to obtain completeness even when reasoning about 
infinite data types~\cite{KuncakMPS13}, as well as techniques 
that limit the control structure of the synthesized program by bounding the 
resources of the program~\cite{SrivastavaGF13}. 

A special form of the latter 
approach is \emph{program sketching}~\cite{Solar-LezamaTBSS06,Solar-Lezama13},
where the control 
structure of the program is given, and values for a fixed number of 
unknown variables are determined by search techniques. Our approach is 
inspired by program sketching, in that we use sketches to limit the control 
structure of synthesized programs. We go beyond standard program sketching in that our 
programs are reactive, and in general can use an unbounded amount of memory 
in addition to the program variables in the sketch. Moreover, the search for 
suitable valuations of variables in sketching is usually implemented as a 
counterexample-guided inductive synthesis (CEGIS) loop, whereas we use an 
automata-based approach that encodes the existence of a solution 
into a single SMT problem. In synthesis of reactive systems, 
synthesis from \emph{partial designs} allows to start with a distributed 
system 
where some components are already implemented~\cite{FinkbeinerS05,FinkbeinerS13}, 
and an approach similar to CEGIS has been proposed as 
\emph{lazy synthesis}~\cite{FinkbeinerJ12}.
\else
\fi

\section{Conclusion}
\label{sec:concl}

Assume-Guarantee Synthesis (\ags) is particularly suitable for
concurrent reactive systems, because none of the synthesized processes relies on the concrete realization of the others. This feature makes a synthesized
solution robust against changes in single processes.  A major limitation of
previous work on \ags was that it assumed perfect information about all
processes, which implies that synthesized implementations may use local
variables of other processes.  In this paper, we resolved this shortcoming by
(1) defining \ags in a partial information setting, (2) proving new complexity
results for various sub-classes of the problem, (3) presenting a pragmatic
synthesis algorithm based on the existing notion of bounded synthesis to solve
the problem, (4) providing the first implementation of \ags, which also supports
the optimization of solutions with respect to user-defined cost functions, and
(5) demonstrating its usefulness by resolving sketches of several concurrent
protocols. We believe our contributions can form an important step towards a
mixed imperative/declarative programming paradigm for concurrent programs, where
the user writes sequential code and the concurrency aspects are taken care of
automatically.

In the future, we plan to work on issues such as scalability
and usability of our prototype, explore applications for security protocols
as mentioned in~\cite{JMM11}, and research restricted cases where the \ags
problem with partial information is decidable.

\vspace{-1em}
{\scriptsize
\bibliographystyle{plain}
\bibliography{paper}
}
\ifextended
\else
\newpage
\appendix
\section{Complexity and Decidability of \ags}
\label{sec:complexity}

\subsection{Complexity Results}

\newpage
\section{Correctness of the \ags Algorithm}
\label{correctness-appendix}

\PropOne*

\PropTwo*

\newpage

\newpage
\section{More Experiments} \label{sec:app:exp}

\fi

\end{document}